\algrenewcommand\algorithmicrequire{\textbf{Input:}}
\algrenewcommand\algorithmicensure{\textbf{Output:}}
\newtheorem{lemma}{Lemma}
\newtheorem{definition}{Definition}
\newcounter{note}[section]
\newcommand{\dist}{\textrm{dist}}
\newcommand{\be}{\begin{enumerate}}
\newcommand{\ee}{\end{enumerate}}
\newcommand{\bi}{\begin{itemize}}
\newcommand{\ei}{\end{itemize}}
\newcommand{\mc}[1]{\mathcal{#1}}
\newcommand{\mb}[1]{\mathbb{#1}}
\newcommand{\E}{\mathbb E}
\newcommand{\poly}{\text{poly}}
\newcommand{\lmt}{\left[\begin{smallmatrix}}
\newcommand{\rmt}{\end{smallmatrix}\right]}
\newcommand{\shortOrLong}{full}
  \newcommand{\fullOnly}[1]{}
  \newcommand{\shortOnly}[1]{#1}
  \newcommand{\fullOnly}[1]{#1}
  \newcommand{\shortOnly}[1]{}
\begin{document}

\newcommand*\samethanks[1][\value{footnote}]{\footnotemark[#1]}
\begin{titlepage}

\title{Erasure Correction for Noisy Radio Networks}
\author{Keren Censor-Hillel\thanks{Supported in part by the Israel Science Foundation (grant 1696/14) and the Binational Science Foundation (grant 2015803).}, Bernhard Haeupler\thanks{Supported in part by NSF grants CCF-1527110, CCF-1618280 and NSF CAREER award CCF-1750808.}, D Ellis Hershkowitz\samethanks, Goran Zuzic\samethanks}
%\author{(Anonymous for double-blind review)}
\date{\today}

\maketitle

\begin{abstract}
  The radio network model is a well-studied model of wireless, multi-hop networks. However, radio networks make the strong assumption that messages are delivered deterministically. The recently introduced noisy radio network model relaxes this assumption by dropping messages independently at random.
  
  In this work we quantify the relative computational power of noisy radio networks and classic radio networks. In particular, given a non-adaptive protocol for a fixed radio network we show how to reliably simulate this protocol if noise is introduced with a multiplicative cost of $\mathrm{poly}(\log \Delta, \log \log n)$ rounds where $n$ is the number nodes in the network and $\Delta$ is the max degree. Moreover, we demonstrate that, even if the simulated protocol is not non-adaptive, it can be simulated with a multiplicative $O(\Delta \log ^2 \Delta)$ cost in the number of rounds. Lastly, we argue that simulations with a multiplicative overhead of $o(\log \Delta)$ are unlikely to exist by proving that an $\Omega(\log \Delta)$ multiplicative round overhead is necessary under certain natural assumptions.
\end{abstract}

\vfill
%\noindent Regular submission.\\
%\noindent Eligible for the Best Student Paper Award: D. Ellis Hershkowitz and Goran Zuzic are both full-time students.
\thispagestyle{empty}
\end{titlepage}

\section{Introduction}\label{sec:intro}
The study of distributed graph algorithms provides precise mathematical models to understand how to accomplish distributed tasks with minimal communication. A classic example of such a model is the radio network model of \citet{onBroadChlamtac}. Radio networks use synchronous rounds of communication and were designed to model the collisions that occur in multi-hop, wireless networks. However, radio networks make the strong assumption that, provided no collisions occur, messages are guaranteed to be delivered. 

This assumption is overly optimistic for real environments in which noise may impede communication, and so our previous work~\cite{haeupler2017broad} introduced the noisy radio network model where messages are randomly dropped with a constant probability. This prior work demonstrated that the runtime of existing state-of-the-art broadcast protocols deteriorates significantly in the face of random noise. Furthermore, it showed how to design efficient broadcasting protocols that are robust to noise.

However, it has remained unclear how much more computationally powerful radio networks are than noisy radio networks. In particular, it was not known how efficiently an arbitrary protocol from a radio network can be \emph{simulated} by a protocol in a noisy radio network. A simulation with little overhead would demonstrate that radio networks and noisy radio networks are of similar computational power. However, if simulation was necessarily expensive then radio networks would be capable of completing more communication-intensive tasks than their noisy counterparts.

A simple observation regarding the relative power of these two models is that any polynomial-length radio network protocol can be simulated at a multiplicative cost of $O(\log n)$ rounds where $n$ is the number of nodes in the network. In particular, we can simulate any protocol from the non-noisy setting by repeating every round $O(\log n)$ times. A standard Chernoff and union bound argument show that every message sent by the original protocol is successfully sent with high probability. However, a multiplicative $O(\log n)$ is a significant price to pay for noise-robustness for many radio network protocols. For example, the optimal known-topology message broadcast protocol of \citet{gkasieniec2007faster} uses only $O(D + \log ^2 n)$ rounds, while the topology-oblivious Decay protocol of~\citet{bar1992time} takes $O(D \log n + \log^2 n)$ where $D$ is the diameter of the network. Moreover, a dependence on a global property of the network---the number of nodes in the graph---is excessive for correcting for a local issue---faults. %How much can we improve on this trivial solution? %Can we get a smaller overhead than the \emph{multiplicative} $O(\log n)$ we just described?

%While the trivial solution yields a poor simulation overhead, it suggests a better simulation which we will intuitively describe now and later formalize.
The simple solution from above \emph{globally synchronizes} nodes by forcing all nodes to simulate the same round for $O(\log n)$ repetitions. A natural question is can one more efficiently simulate a noisy protocol by enforcing only \emph{local synchronization}. In this paper we show how to use local synchronization to efficiently simulate radio network protocols in a noisy setting.

The local synchronization technique we use is as follows. Suppose each node tracked the round in the original protocol up to which it has successfully simulated; we call this round a node's \emph{virtual round}. In our simulation in each round each node simulate the virtual round of its neighbor which has successfully simulated the fewest total rounds; i.e.\ each node simulates the virtual round of its neighbor with the largest ``delay'', thereby ``helping'' it. Such a simulation is local as---unlike the above simple simulation---nodes in distant parts of the network may simulate different rounds of the original protocol. Moreover, if a node has successfully simulated fewer rounds than all of its neighbors (and all of its neighbors' neighbors) then after one more round of simulation it will successfully simulate a new round of the original protocol if no random faults occur.

However, there are at least three notable challenges in implementing and proving the efficiency of such a local-synchronization-based simulation.
\begin{enumerate}
\item First, one must show that locally synchronizing nodes yields a fast simulation. A priori, it is not clear that locally synchronizing nodes provides an advantage over the simple global synchronization strategy.
	
\item Second, one must deal with the fact that local synchronization requires nodes to determine the minimal virtual rounds of its neighbors. In particular, nodes cannot easily compute the virtual rounds of their neighbors without communicating: When node $v$ simulates a round of the original protocol and broadcasts should it assume all of its neighbors have now successfully simulated this round? If a random fault occurred at a receiver then clearly $v$ should not but $v$ has no easy means of determining whether or not any such faults occurred. One must, therefore, determine how nodes can efficiently determine the minimal virtual round of their neighbors.
	
\item Lastly, one must overcome the fact that not receiving a message is indistinguishable from a randomly dropped message. In particular, a node can interpret not receiving a message in a simulated round as indicating either (1) that it receives no message in the simulated round of the original protocol or (2) that it does receive a message in this simulated round but a random fault dropped this message. Thus, it is not clear how nodes ought to advance their virtual rounds in the absence of messages. On the one hand, failing to increment a node's virtual round in the former case could needlessly slow down the simulation. On the other hand, if a node incorrectly decides it does not receive a message in a round, its idea of what messages it received will diverge from that of its neighbor who tried to send it a message. This divergence, in turn, may compound into further errors later in the simulation.
% @@@CHECK: "may lead to nodes with diverging ideas of what messages have been receiving" .. it seems a bit unclear, I would rephrase it a bit. Otherwise it's fine. EH addressed

%Suppose every neighbor of a node $v$ simulates the earliest round of the original protocol which $v$ has not simulated. If $v$ does not receive a message when all of its neighbors simulate this round should $v$ consider this round successfully simulated? It might be that in the original protocol $v$ receives no messages in this round in which case $v$ should consider this round successfully simulated. On the other hand, $v$ might really receive a message in this round in the original protocol but a random fault might have occurred in the simulation in which case $v$ should not consider this round successfully simulated. If $v$ incorrectly decides that it does not receive any messages in this round and therefore incorrectly considers this round simulated, it will have a different notion of what messages it receives in the original protocol from its neighbors---an incorrect assumption that is liable to compound into further errors later in the simulation.
\end{enumerate}

\subsection{Our Contributions}

% @@@CHECK: "local synchronization, radio network" (perhaps add a comma). EH addressed
In this work we  present solutions for these challenges which demonstrate that, by using local synchronization, radio network protocols can be simulated by noisy radio networks with a multiplicative dependence on $\Delta$, the maximum degree of the network. Thus, we demonstrate that, for low-degree networks, noisy radio networks are essentially of the same computational power as radio networks. Moreover, we demonstrate that this dependence is more or less tight in two natural settings.

We give our simulations in three increasingly difficult settings, each of which introduces one of the above three challenges. In particular, our first model focuses on the first challenge, our second focuses on the first two challenges and our last model deals with all three challenges. %In all of our simulations nodes store some notion of the round up to which they have successfully simulated and always help their neighbors that have made the least progress. In particular, nodes always simulate the round of their neighbor that has successfully simulated the fewest rounds. 
%We also sketch our lower bound results here. 
We briefly mention our techniques here but defer more thorough intuition regarding our simulation techniques until \Cref{sec:intution}, which is after we have more formally defined our problem. 

\begin{enumerate}
\item \textbf{Local Progress Detection } As a warmup we begin by providing a simulation with $O(\log \Delta)$ multiplicative round overhead in the setting where nodes have access to ``\emph{local progress detection}''. Roughly, local progress detection enables nodes to know when they experience a random fault and also the virtual rounds of their neighbors. Notice that in this setting challenges 2 and 3 are non-issues: if each node has local progress detection, they can easily determine neighbors' delays and distinguish not receiving a message in the original protocol from a randomly dropped message. The key technique we introduce for the first challenge is a concentration-inequality-type result based on what we call ``blaming chains''. (\Cref{sec:warmup-progress-detection})

%Our crucial insight in this setting is that allowing nearby nodes to simulate the same round of the original protocol while flexibly allowing distant nodes to simulate different rounds enables a multiplicative round overhead with a dependence on $\Delta$. We show this dependence by a novel ``blaming chain'' argument. 

% @@@CHECK: this paragraph never explicitly says WHY is the setting easier. The reason is that exchanging virtual rounds can be done MUCH FASTER in this setting (issue #2). I think this is more important to emphasize than saying how we actually do it (via a binary search + whatnot). The current paragraph is addressing issue #3 and talking how one can figure out drops vs no-send.
\item \textbf{Non-Adaptive Protocols } Next, we provide simulations for \emph{non-adaptive} protocols.  Roughly, non-adaptive protocols are protocols in which nodes know a priori the rounds in which they receive messages. In this setting, our simulations achieve a multiplicative $ \poly(\log \Delta, \log \log n) = O(\log^3 \Delta \cdot \log \log n \cdot \log \log \log n)$ round overhead \emph{without local progress detection}. As we argue in \Cref{sec:static-simulators}, a number of well-known protocols (e.g.\ the optimal broadcast algorithm of \citet{gkasieniec2007faster}) are non-adaptive. Notice that in this setting we need not deal with challenge 3 above since nodes can always distinguish a dropped message from a round in which they receive no messages because they know the rounds of the original protocol in which they receive messages. In this setting, we leverage non-adaptiveness of protocols to overcome challenge 2. In particular, we use a distributed binary search which carefully silences nodes that search over divergent ranges to inform nodes of the minimal virtual round of their neighbors. Moreover, we keep the range over which the binary search must search tractable by slowing down nodes that make progress too quickly.  (\Cref{sec:static-simulators})

%Simulating static protocols therefore isolates the challenge of efficiently sharing information locally in a noisy radio network. We overcome this challenge by using a distributed binary search which carefully silences nodes that search over divergent ranges. Moreover, we keep the range over which the binary search must search tractable by throttling nodes that make progress too quickly. 

\item \textbf{General Protocols } Lastly, we show how to deal with all three challenges at once by giving simulations for arbitrary radio network protocols with a multiplicative $O(\Delta \log^2 \Delta)$ overhead. To overcome challenge 3 we have nodes exchange ``tokens'' with all neighbors in every round to preclude the possibility of a dropped message going unnoticed. (\Cref{sec:general-simulator}) %The challenge of the general setting is the asymmetric nature of the radio network model: a node $v$ can communicate to its $\Delta$ neighbors in $1$ round, but it takes $\Delta$ rounds until all of $v$'s neighbors can communicate with $v$. It is difficult, then, for a node that broadcasted to know with certainty that all of its neighbors received its message. For this reason, our simulation involves nodes exchanging messages with all neighbors in every round to preclude the possibility of a dropped message going unnoticed. We note that the $O(\Delta)$ difference between this result and our $O(\log \Delta)$ with local progress detection suggests that the primary challenge in simulation is not in recovering information lost to noise but rather in exchanging information regarding how much progress nodes have made. 

%%% @@@CHECK: I feel like this paragraph doesn't really tell the full story, which should be something like "we argue that our results for non-adaptive protocols are tight. In particular, one should not expect a multiplicative overhead of o(log \Delta). To this end, we show that two natural classes of simulations have to incur the \Omega(log Delta) blowup ...". EH addressed
\item \textbf{Lower Bounds } We also show that our simulations for non-adaptive protocols are likely optimal: We show that two natural classes of simulations necessarily use $\Omega(\log \Delta)$ multiplicatively many more rounds than the protocols which they simulate. In particular, a simulation that either (1) does not use network coding, or (2) sends information in the same way as the original protocol requires $\Omega(\log \Delta)$ multiplicatively many more rounds than the original protocol. We also give a construction which we believe gives an unconditional $\Omega(\log \Delta)$ simulation overhead. (\Cref{sec:lower-bounds})
\end{enumerate}

\section{Related Work}
Let us review some related work.
\subsection{Robust Communication}
Several models have been studied to understand robust communication in models similar to the radio network model. \citet{ElGamal1984} introduced the noisy broadcast model, which also assumes random errors, and focuses on studying the computation of functions of the inputs of nodes~\cite{Gallager88, Newman04, GoyalKS08, kushilevitz1998computation}. The model of \citet{ElGamal1984} differs from our model in that it assumes a complete communication network and single-bit transmissions. \citet{rajagopalan1994coding} introduced a similar model which again assumes single-bit transmissions and also does not have collisions as our model does. Several papers have been written on notions of noisy radio networks which assume the network admits geometric structure. \citet{kranakis1998fault} considers broadcasting in radio networks where unknown nodes fail. In this work, unlike our own, nodes in the network admit some geometric structure; e.g.\ every node is at integer points on the line. In a similar vein, \citet{kranakis2008communication} consider radio networks with possibly correlated faults at nodes. Like the previous work, this work assumes that there is an underlying geometric structure to the radio network; namely the network is a disc graph. There have also been several papers on noisy single-hop radio networks. See either \citet{gilbert2009interference} for a study of an adversarial model or \citet{efremenko2018interactive} for a nice study of a random noise model. Another model which captures uncertainty is the \emph{dual graph} model~\cite{kuhn2009abstract,Censor-HillelGKLN14,KuhnLNOR10,GhaffariHLN12,GhaffariLN13}, in which an adversary chooses a set of unreliable edges in each round. 

There has also been extensive work on two-party interactive communication in the presence of noise \cite{schulman1996coding,haeupler2014interactive,gelles2017coding}. In this setting, Alice and Bob have some conversation in mind they would like to execute over a noisy channel; by adding redundancy they hope to hold a slightly longer conversation from which they can recover the original conversation outcome even when a fraction of the coded conversation is corrupted. Multi-party generalizations of this problem have also been studied \citet{braverman2017constant}. The noisy radio network model can be seen as a radio network analogue of these interactive communication models in which erasures occur rather than corruptions.

Lastly, work on MAC layers has sought to provide abstractions for algorithms that hide low-level uncertainty in wireless communication \cite{ghaffari2014multi,kuhn2009abstract,richa2008jamming,khabbazian2014decomposing}. Radio networks differ from MAC layers as in radio networks it is not required that a sender receive an acknowledgment from a receiver.

\subsection{Radio Networks}
Since its introduction by~\citet{onBroadChlamtac}, the classic radio network model has attracted wide attention from researchers. The survey of~\cite{Peleg2007} is an excellent overview of this research area. Here we focus the radio network literature that relates to our work. Much of previous work for the noisy radio network model focused on broadcast \cite{kowalski2005time}. For the classic model, ~\citet{bar1992time} gave a single-message broadcast algorithm for a known topology, which completes in $O(D\log n + \log^2 n)$ rounds. \citet{haeupler2017broad} shows that this protocol is robust to noise, completing in $O(\frac{\log{n}}{1-p}(D+\log{n}+\log{\frac{1}{\delta}}))$ rounds, with a probability of failure of at most $\delta$.
In the classic model, single-message broadcast was then improved by~\citet{gkasieniec2007faster} and ~\citet{Kowalski2007}, who showed that in the case of a known topology, $O(D + \log^2 n)$ rounds suffice. \citet{haeupler2017broad} shows that this protocol is \emph{not} robust to noise, requiring in expectation $\Theta(\frac{p}{1-p}D\log{n}+\frac{1}{1-p}D)$ rounds, for broadcasting a message along a path of length $D$. They showed an alternative protocol, completing in $O(D+\log{n}\log\log{n}(\log{n}+\log{\frac{1}{\delta}}))$ rounds, with a probability of failure of at most $\delta$.
For an unknown topology in the classic model, \citet{Czumaj2006115} give a protocol completing in $O(D \log (n/D) + \log^2 n)$ rounds, which is optimal, due to the $\Omega(\log^2{n})$ and $\Omega(D \log (n/D))$ lower bounds of~\citet{alon1991lower} and~\citet{Kushilevitz1993}, respectively. \citet{ghaffari2015randomized} give a $O(D + \poly\log n)$-round protocol that uses \emph{collision detection}.

\subsection{Simulations}
Lastly, simulations of models of distributed computation by other models of distributed computation is a foundational aspect of distributing computing, dating back to the 80's--90's with many simulations of various shared memory primitives, faults, and more (see a wide variety in, e.g.,~\citet{AttiyaWBook}). It is also a focus for message-passing models with different features, being the motivation for synchronizers~\citep{Awerbuch1985,AwerbuchP1990}, and additional simulations~\citep{Censor-HillelHK17,Censor-HillelGH18}.

\section{Model and Assumptions}
In this section we formally define the classic radio network model, the noisy radio network model and discuss various assumptions we make throughout the paper.

\textbf{Radio Networks } A multi-hop radio network, as introduced in \citet{onBroadChlamtac}, consists of an undirected graph $G = (V, E)$ with $n := |V|$ nodes. Communication occurs in synchronous rounds: in each round, each node either broadcasts a single message containing $\Theta(\log n)$ bits to its neighbors or listens. A node receives a message in a round if and only if it is listening and \emph{exactly} one of its neighbors is transmitting a message. If two or more neighbors of node $v$ transmit in a single round, their transmissions are said to \emph{collide} at $v$ and $v$ does not receive anything. We assume no collision detection, meaning a node cannot differentiate between when none of its neighbors transmit a message and when two or more of its neighbors transmit a message. Nodes are also typically assumed to have unbounded computation, though all of the protocols in our paper use polynomial computation.

We use the following notational conventions throughout this paper when referring to radio networks. Let $\mathrm{dist}(v, w)$ be the hop-distance between $u$ and $v$ in $G$, let $\Gamma(v) = \{ w \in V : \textrm{dist}(v, w) \le 1 \}$ denote the 1-hop neighborhood of $v$, and let $\Gamma^{(k)}(v) = \{ w \in V : \textrm{dist}(v, w) \le k \}$ denote the $k$-hop neighborhood of $v$.

\textbf{Noisy Radio Networks } A multi-hop noisy radio network, as introduced in \citet{haeupler2017broad}, is a radio network with random erasures. In particular, it is a radio network where node $v$ receives a message in a round if and only if it is listening, exactly one of its neighbors is broadcasting and a \emph{receiver fault} does not occur at $v$. Receiver faults occur at each node and in each round independently with constant probability $p \in (0, 1)$. As $p$ will be treated as a constant it will be suppressed in our $O$ and $\Omega$ notation. We assume that in a given round a node cannot differentiate between a message being dropped because of a fault, i.e., a collision, and all of its neighbors remaining silent.

We consider this model because independent receiver faults model transient environmental interference such as the capture effect \cite{leentvaar1976capture}.\footnote{In contrast, a sender faults model in which an entire broadcast by a node is dropped might model hardware failures where independence of faults would be a poor modeling choice.} Moreover, we consider an \emph{erasure} model---entire messages are dropped---rather than a corruption model---messages are corrupted at the bit level--- because, in practice, wireless communication typically incorporates error correction and checksums that can guard against bit corruptions \cite{elaoud1998adaptive}. The noisy radio network model can also be seen as modeling those cases when this error correction fails and the message cannot be reconstructed and is therefore effectively dropped.

% @@@CHECK: can you please reread this, the following paragraph reads very weirdly to me. This |P| = T seems very strange. Also we never mention that messages are O(log n) bits except in the definition of M. EH addressed; O(log n) also mentioned above when defining networks. Not sure what the problem is in only mentioning O(log n) in definition of M.
\textbf{Protocols } A protocol governs the broadcast and listening behavior of nodes in a network. In particular, a protocol tells each node in each round to listen or what message to broadcast based on the node's history. This history includes the messages the node received, when it received them and its initial private input. We assume that this private input includes the number of nodes, $n$, the maximum degree, $\Delta$, the receiver fault probability, $p$, a private random string, and any other data nodes have as input in a protocol. Formally, a \textbf{history} for node $v$ is an $H \in \mathcal{H}$ where $\mathcal{H}$ is all valid histories. $\mathcal{H} = \{(i, R) : i \in \mathcal{I}, R \subseteq M \times \mathbb{N}\}$ where $\mathcal{I}$ is all valid private inputs, $R$ gives what messages $v$ has received in each round and $M = \{0, 1\}^{O(\log n)}$ is all $O(\log n)$ bit messages a node could receive in a single round. Formally, a \textbf{protocol} $P$ of length $T$ is a function $P : V \times [T] \times \mathcal{H} \to \{ \textit{listen} \} \cup M$ where $P(v,t,H) = \textit{listen}$ indicates that $v$ listens in round $t$ and $P(v,t,H) = m \in M$ indicates that $v$ broadcasts message $m \in M$ in round $t$. We let $|P| := T$ stand for the length of a $T$ round protocol. In this paper we assume that $T$ is at most $\poly(n)$.

% Exponentially sized protocols are rare in computer science and one can often convert to a protocol of polynomial length: divide the protocol into, say, $\Theta(n^5)$-sized blocks; simulate each block with constant probability of success; at the end of each block do a $O(n^5)$-round application specific verification routine that checks did the entire block complete successfully with probability at least $1 - \exp(\Omega(n))$; if not, repeat.\enote{I'm going to remove this. I don't think this really requires much justification.}

\textbf{Simulating a Protocol in the Noisy Setting } We say that protocol $P'$ successfully \textbf{simulates} $P$ if, after executing $P'$ in the noisy setting, every node in the network can reconstruct the messages it would receive if $P$ were run in the faultless setting. In particular, for any set of private input $I \in \mathcal{I}^{|V|}$---letting $H(v, I)$ be $v$'s history after running $P$ with private inputs $I$---it must hold that after running $P'$ with private inputs $I$, every $v$ can compute $H(v, I)$. Note that nodes running $P'$ can send messages not sent by $P$ or send messages in a different order than they do in $P$. We call $P$ the \textbf{original protocol} and $P'$ the \textbf{simulation protocol}. We measure the efficacy of $P'$ as the limiting ratio of $\frac{|P'|}{|P|}$ when $T$ is sufficiently large and $n$ goes to infinity, which we call the \textbf{multiplicative overhead} of a simulation.

\section{Techniques Overview and Our Formal Results}\label{sec:intution}
As earlier mentioned, we show how to simulate a faultless protocol $P$ in three noisy settings of increasing difficulty. Throughout our simulation results, we use the notion of a \textbf{virtual round} of node $v$, $t_v$, which tracks how many rounds of $P$ node $v$ has successfully simulated. We say that a node $v$ is \textbf{most delayed} in a set of nodes $U$ when $t_v \le \min_{w \in U} t_w$. All three simulations roughly work by having nodes first exchange their virtual rounds with their neighbors. Nodes then locally synchronize by simulating the virtual round of their most delayed neighbor. Our simulations differ in how each defines a virtual round and how each exchanges information regarding nodes' virtual rounds. 

As a warmup and to study what sort of overhead local synchronization enables, we consider the setting where nodes have access to ``local progress detection''. Recall that local progress detection gives nodes oracle access to the virtual rounds of their neighbors as well as when faults occur. We show the following theorem which demonstrates that a $O(\log \Delta)$ overhead is possible in this setting.%
\begin{restatable}{theorem}{cheatingSimulator}
	\label{thm:global-control-simulation}
	Let $P$ be a general protocol of length $T$ for the faultless radio network model. $P$ can be simulated in the noisy radio setting using local progress detection in $O(T \log \Delta + \log n + k)$ rounds with probability at least $1 - \exp(-k)$ for any $k \ge 0$ .
\end{restatable}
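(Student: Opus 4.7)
My plan is to analyze the natural simulation protocol $P'$ sketched in the preamble. Every node $v$ keeps a virtual-round counter $t_v$, initialized to $1$, and in each real round uses local progress detection to read $t_u$ for all $u \in \Gamma(v)$, compute $s_v := \min_{u \in \Gamma(v) \cup \{v\}} t_u$, and then act exactly as $P$ would in round $s_v$ (broadcasting the prescribed message or listening). A node advances $t_v$ only once local progress detection certifies that round $t_v$ has been simulated correctly at $v$'s location: a broadcaster waits until every listening neighbor's counter exceeds $t_v$; a listener advances once it has received the message $P$ predicts for round $t_v$, or once neighbors' counters confirm that no message was due.

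I would first record two structural invariants. First, $t_v$ is monotone and $v$ always knows enough of its own simulated history to evaluate $P(v, s_v, \cdot)$ in any real round, since $s_v \le t_v$. Second --- the key structural fact --- whenever $v$ is a \emph{local minimum}, i.e.\ $t_u \ge t_v$ for all $u \in \Gamma(v)$, every node in $\Gamma(v) \cup \{v\}$ is simulating the same round $t_v$, so the real-round broadcast/listen pattern in $v$'s neighborhood exactly matches $P$'s pattern in round $t_v$, and only independent receiver faults can block progress.

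The crux of the proof is a blaming chain argument. In any real round in which a node $v_0$ fails to advance, either a receiver fault hit one of the nodes critical to $v_0$'s progress (charged to noise), or $v_0$ is not yet a local minimum, in which case there is a specific slower neighbor $v_1 \in \Gamma(v_0)$ with $t_{v_1} < t_{v_0}$ that I will blame. Iterating, I get a strictly decreasing chain $v_0, v_1, \ldots, v_\ell$ that must terminate at a local minimum $v_\ell$, where only noise can be responsible. By a Chernoff bound plus a union bound over the at most $\Delta$ listening neighbors of $v_\ell$, $O(\log \Delta)$ fresh trials suffice for a broadcast at $v_\ell$ to reach every listener simultaneously, so each single virtual-round increment costs $O(\log \Delta)$ real rounds in expectation. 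Crucially, distant local-minimum regions make progress in parallel and do not interfere, which is why the overhead depends only on $\Delta$ and not on the diameter or on $n$. Summing over $T$ virtual rounds then yields the $O(T \log \Delta)$ multiplicative cost.

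The additive $\log n + k$ term comes from a routine Chernoff concentration: fixing any node $v$, the number of real rounds until $t_v$ reaches $T$ is stochastically dominated by a sum of $T$ independent geometric random variables with mean $O(\log \Delta)$, so the probability that it exceeds $c(T \log \Delta + \log n + k)$ is at most $\exp(-\log n - k)$, and a union bound over the $n$ nodes gives the stated $1 - \exp(-k)$ guarantee. The hardest part will be making the blaming chain rigorous: I will need to define the notion of a ``critical neighbor'' for $v_0$ carefully so that a single receiver-fault event is not double-charged by too many chains originating at different nodes, and to handle the boundary case in which two distant local-minimum regions at different virtual rounds share a common node. The ``simulate the neighborhood minimum'' rule is designed precisely to prevent cross-round collisions at such shared nodes by forcing their behavior to align with the slower region, but turning this intuition into a clean amortized accounting of noise events is where the real work lies.
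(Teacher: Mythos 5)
There are two genuine gaps. First, your key structural invariant is wrong as stated: being a \emph{1-hop} local minimum does not guarantee that all of $\Gamma(v)$ simulates round $t_v$. Under the rule ``simulate the minimum of your own neighborhood,'' a neighbor $u$ of $v$ may be helping some $w \in \Gamma(u)$ at distance $2$ from $v$ with $t_w < t_v$, so $v$ can stall with no receiver fault and no slower 1-hop neighbor to blame. Hence your dichotomy (``noise at a critical node, or a slower neighbor $v_1 \in \Gamma(v_0)$'') is incomplete, and your chain's termination claim (``at a local minimum only noise can be responsible'') fails. The correct condition is minimality in the \emph{2-hop} neighborhood, and the blame must be allowed to jump to 2-hop neighbors; the paper's recurrence is $D_{v,x} \le \max_{w \in \Gamma^{(2)}(v)} D_{w,x-1} + Y_{v,x}$, where $Y_{v,x}$ is a maximum of at most $\Delta$ constant-mean geometrics, and its blaming chains are sequences of (node, round) pairs with consecutive nodes within distance $2$.

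Second, your concentration step is unjustified. The total delay $D_{v,T}$ is a \emph{maximum over exponentially many chains} of sums of per-step delays; you cannot push stochastic domination (or expectation) through the $\max_{w \in \Gamma^{(2)}(v)}$ in the recurrence, so the claim that $D_{v,T}$ is dominated by a single sum of $T$ independent geometrics with mean $O(\log \Delta)$ does not follow --- and this is exactly the point your ``amortized accounting of noise events'' worry is circling without resolving. The paper instead bounds each fixed chain (a sum of $T$ maxima of $\le \Delta$ constant-mean geometrics, via \Cref{lem:chernoff-sum-of-max}) and then union bounds over the at most $(\Delta^2)^{T-1}$ chains per node and over the $n$ nodes, absorbing the resulting $n\Delta^{2T}$ factor by shifting $k$ by $\ln n + 2T\ln\Delta$; no charging scheme or double-counting issue ever arises. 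Note also that even if your domination claim held, a geometric with mean $\Theta(\log\Delta)$ has tail $\exp(-\Theta(t/\log\Delta))$, so your Chernoff step would only yield an additive term of order $(\log n + k)\log\Delta$, not the claimed $\log n + k$.
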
%
%%% @@@CHECK: I would be careful about claiming the concentration inequalities are "new". I'm fine with a reviewer mistakenly thinking they're new (like it happened before), but they're not *really* new. We can call them "tailor-made" if you want to prop them up and we won't be scorned over it. EH addressed
\paragraph{Blaming Chain Intuition:} To prove the above theorem we use the idea of a blaming chain to argue that local synchronization enables small simulation overhead. Since every node simulates the round of its most delayed neighbor, a node $v$ will have all neighbors simulate its virtual round when its virtual round is minimal among virtual rounds of nodes in $\Gamma^{(2)}(v)$. Moreover, as we will show, once $v$ is most delayed in $\Gamma^{(2)}(v)$ it does not take \emph{too many} additional rounds for nodes to successfully simulate $v$'s virtual round. Thus, if $v$ takes many rounds to successfully simulate virtual round $x$, it must be because there was a $u \in \Gamma^{(2)}(v)$ that required many rounds to simulate virtual round $x-1$. In this way $v$ can blame its delay on $u$. Node $u$, in turn, can blame the fact that it required many rounds to simulate virtual round $x-1$ on the fact that one of its 2-hop neighbors, say $w$, required many rounds to simulate virtual round $x-2$ and so on. Thus, if node $v$ is very delayed we can explain this delay by some such blaming chain. There are exponentially many possible such blaming chains, but---by way of concentration inequalities we prove---we show that long blaming chains can be shown to have exponentially small probability and so a union bound over all long blaming chains will allow us to show that no long blaming chain occurs.

%%% @@@CHECK: add a \medskip here. EH addressed
\medskip
We next show how to efficiently spread virtual round information without using progress detection. In particular, we show a $\poly(\log \Delta, \log \log n)$ overhead for non-adaptive protocols where nodes know a priori the rounds of the original protocol they are sent messages without collision.\footnote{When we write that an event occurs \textbf{with high probability (w.h.p.)}, we mean that it occurs with probability $1 - \frac{1}{n^c}$, and that the constant $c = \Theta(1)$ can be made arbitrarily large by changing the constants in the routines.}

\begin{restatable}{theorem}{staticSimulationTheorem}
	\label{thm:local-static-simulation}
	Let $P$ be a non-adaptive protocol of length $T$ for the faultless radio network model. $P$ can be simulated in $O((T + \log n) \log^3 \Delta \log \log n \log \log \log n)$ rounds in the noisy radio setting with high probability by \textsc{MainNonAdaptive}.
\end{restatable} 

%\noindent The crucial property of non-adaptive protocols that we leverage in our proof of this result is that nodes in a non-adaptive protocol can efficiently learn their virtual round. 

\paragraph{Progress Throttling and Distributed Binary Search Intuition:} In addition to blaming chains, the main technique we use to prove the above result is progress throttling and a novel algorithm for binary search in noisy radio networks. Since in this setting nodes can no longer learn their neighbors' virtual rounds using local progress detection, our goal is to provide nodes an alternative means of learning their neighbors virtual rounds; namely, we use progress throttling and distributed binary search. Since virtual rounds can be as small as $1$ and as large as the length of the entire protocol which is as large as $\poly (n)$, the range over which we must binary search might seem to be as large as $\poly(n)$; a binary search over this range would require a prohibitive $O(\log n)$ iterations. For this reason, we slow down nodes that make progress too quickly by only increasing their virtual round at most once after $\log \Delta$ simulated rounds. We show that this throttling keeps all virtual rounds within an additive $O(\log n)$ range. This, in turn, allows our binary search to require only $O(\log \log n)$ iterations. 

Moreover, actually implementing a distributed binary search in a radio network presents technical challenges of its own. The natural solution we use for node $v$ to learn the smallest virtual round of a neighbor is for $v$ to repeatedly ask its neighbors if any of them have a virtual round below the midpoint of the binary search range. $v$ then updates its binary search parameters in the usual way. This strategy enables an efficient binary search in radio networks because to update its binary search parameters, $v$ need only hear from at most one neighbor. However, such a strategy suffers from the following problem of divergent ranges. Suppose node $v$ has a neighbor $u$. $u$  might have neighbors that are not neighbors of $v$ which influence the range over which $u$ searches. As such $u$ might end up searching over a different range than $v$. These divergent ranges may, unfortunately, render $v$'s binary search nonsensical: $v$ might query $u$ to learn if its virtual round is below $v$'s binary search midpoint and $u$ could respond with whether or not it is below the midpoint of an entirely different search range, namely $u$'s search range. The key idea we use to overcome this issue is to carefully mark nodes ``silent'' if they might interfere with another node's binary search. Specifically, we show that, given a node $v$ whose virtual round is minimal in $\Gamma^{(2)}(v)$,  a careful silencing of possibly deviating nodes causes nodes in $\Gamma(v)$ to always update their binary search parameters in exactly the same manner as $v$. Furthermore, we show that, while nodes in $\Gamma^{(2)}(v) \setminus \Gamma(v)$ might update their parameters differently, our silencing ensures that these nodes never interfere with the binary search performed by nodes in $\Gamma(v)$.

%As we later argue, non-adaptive protocols are a broad family of protocols. 
%%% CHECK: add a \medskip here
\medskip
Lastly, we describe how to simulate \emph{any} protocol with a $O(\Delta \log ^2 \Delta)$ multiplicative overhead.%
%(deferred to \Cref{sec:general-simulator}).
\begin{restatable}{theorem}{generalSimulationTheorem}
	\label{thm:local-general-simulation}
	Let $P$ be a general protocol of length $T$ for the faultless radio network model. \textsc{MainGeneral} simulates $P$ in the noisy radio setting in $O((T \log \Delta + \log n) \Delta \log \Delta)$ rounds w.h.p.
\end{restatable}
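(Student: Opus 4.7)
The plan is to design \textsc{MainGeneral} as a sequence of \emph{meta-rounds}, each of length $\Theta(\Delta \log \Delta)$, that attempts to advance each node's virtual round $t_v$ by one. Within a meta-round I would use a $\Theta(\Delta)$-slot TDMA-style schedule (derivable from a coloring of the conflict graph $G^2$, which can either be computed from the topology as private input or established once in the additive $\log n$ overhead) so that for every ordered pair of adjacent nodes $(v,w)$ there is a slot in which $v$ transmits without collision at $w$. Each such slot is then repeated $\Theta(\log \Delta)$ times so that each intended transmission reaches its target, against noise, except with probability $\Delta^{-c}$.

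Each meta-round has three phases. First, an \emph{exchange} phase in which every node broadcasts $t_v$ in its assigned slot, so that each $v$ learns $\{t_w : w \in \Gamma(v)\}$ and forms $t^* := \min_{w \in \Gamma(v) \cup \{v\}} t_w$. Second, a \emph{simulation} phase in which nodes with $t_v = t^*$ perform the action $P(v, t^*, H_v)$ in their designated slots. Third, an \emph{acknowledgment} phase in which each would-be receiver confirms receipt (or non-receipt) to the sender in its own slot. Node $v$ advances $t_v$ by one only after its own simulated action succeeds and every neighbor that needed to hear from it in virtual round $t_v$ has acknowledged doing so. This acknowledgment phase is the key mechanism to overcome the asymmetry of the radio model: since a broadcaster cannot directly detect a noise-induced erasure at any one of its $\Delta$ receivers, we force each receiver to explicitly confirm before the sender considers that virtual round complete.

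For correctness and the round-complexity bound, I would first show that with high probability every meta-round throughout the $\poly(n)$-long execution achieves faithful exchange between every pair of adjacent nodes. This follows from Chernoff on the $\Theta(\log \Delta)$ repetitions per slot, combined with a union bound taken neighborhood-locally (over the $\Delta$ neighbors of each node) and then aggregated globally through the progress argument rather than through a naive $n \cdot \poly(n)$ union bound over all slot attempts. Conditional on this event, I would apply a blaming-chain argument in the spirit of \Cref{thm:global-control-simulation}: whenever $v$ fails to advance in a meta-round, some neighbor $w$ must satisfy $t_w \le t_v$ or must have failed to deliver/acknowledge, and iterating this dependency produces a non-increasing chain of virtual rounds of length $O(T + \log n)$ before the global minimum virtual round is forced to strictly rise. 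Summing shows that the simulation completes in at most $O(T + \log n)$ meta-rounds, i.e., $O((T + \log n)\,\Delta \log \Delta)$ actual rounds.

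The main obstacle is precisely the asymmetry noted in the introduction: ensuring that the acknowledgment phase is itself resilient to noise without blowing up the per-slot overhead by an additive $\log n$, and ensuring that a node never prematurely advances $t_v$ because of a spuriously received acknowledgment or a missed neighbor. Achieving a $\log \Delta$ factor rather than $\log n$ hinges on taking noise-failure union bounds neighborhood-locally and then charging each residual failed slot to the blaming chain, so that sporadic failures merely delay progress by a constant number of meta-rounds rather than violate the simulation invariant that every $v$ eventually reconstructs its history $H(v,I)$ under $P$.
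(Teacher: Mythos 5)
Your proposal has two genuine gaps, and the paper's actual construction is built precisely to avoid them. First, the correctness mechanism is missing the key idea. The central difficulty of the general (non-static) setting is receiver-side ambiguity: when node $v$ hears nothing while simulating round $t$, it cannot tell whether all its neighbors are silent in round $t$ of $P$, whether a collision of $P$ is being reproduced, or whether an erasure occurred in the simulation. Your acknowledgment phase informs the \emph{sender} that its broadcast got through, but it never gives the \emph{receiver} certainty that it has fully reconstructed round $t$: neighbors that are silent in round $t$ of $P$ send nothing in your simulation phase, so $v$ can never certify that a silent slot was genuine silence rather than an erasure, and hence can never safely advance or reconstruct $H(v,I)$. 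The paper resolves this with ``tokens'': in every simulated round \emph{every} neighbor transmits either its round-$t$ message or an explicit ``I am not broadcasting'' token, and the virtual round is redefined purely receiver-side as the largest $t_v$ such that $v$ has collected all neighbors' tokens for rounds $<t_v$. This removes acknowledgments entirely and makes the blaming-chain lemma (a most-delayed node in its 2-hop neighborhood advances with constant probability per iteration of \textsc{MainGeneral}) go through cleanly; your sender-side advancement rule does not support that lemma as stated.

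Second, your scheduling layer is unsound for this model. A $\Theta(\Delta)$-slot TDMA schedule ``from a coloring of $G^2$'' does not exist in general: $\chi(G^2)$ can be as large as $\Delta^2+1$ (e.g., the Petersen graph, or any Moore-type graph of diameter 2), so a distance-2-coloring-based schedule costs $\Theta(\Delta^2\log\Delta)$ per meta-round, breaking the claimed bound; moreover the model gives nodes only $n$, $\Delta$, $p$ as private input, and ``establishing the coloring once in additive $\log n$ overhead'' in a noisy radio network with unknown topology is an unsupported (and almost certainly false) claim. The paper instead uses the oblivious \textsc{ShareKnowledge} routine -- each node broadcasts with probability $1/\Delta$ for $O(\Delta\log\Delta)$ rounds -- which needs no topology knowledge and gives every (sender, receiver) pair an $\Omega(1/\Delta)$ success chance per round (\Cref{lem:shareKnowledge}); the resulting constant per-iteration success is then boosted by the blaming chain exactly as in \Cref{thm:local-static-simulation}. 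Relatedly, your opening claim that w.h.p.\ \emph{every} meta-round of a $\poly(n)$-length execution achieves faithful exchange is inconsistent with only $\Theta(\log\Delta)$ repetitions per slot (per-meta-round failure is only polynomially small in $\Delta$, not in $n$); you partially walk this back by charging failures to delays, but then your count of $O(T+\log n)$ meta-rounds needs the per-meta-round failure probability to be $\Delta^{-c}$ for a large constant $c$ together with a sharpened chain tail bound, neither of which you argue.
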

\paragraph{Token Exchange Intuition:} In addition to blaming chains, the main technique we use to show the above theorem is a token exchange strategy. Recall that the main challenge in the general setting is that even if a node knows its neighbors are simulating its virtual round, the node cannot tell if the absence of a message indicates that it receives no message in this round in the original protocol or that a random fault occurred. As such if a node does not have a message to deliver to its neighbor, we force it to send its neighbor a token indicating that it has no message to send. This allows $v$ to distinguish between a random fault and a round in which it simply receives no message.

\medskip
In \Cref{sec:lower-bounds} we give a formal statement of how simulations in two natural settings require $\Omega(\log \Delta)$ overhead but we give some intuition here as to why this might be true here.

\paragraph{Lower Bound Construction Intuition:} Consider a star with degree $\Delta$ where the center node wants to send $T$ messages to its neighbors. A noiseless protocol requires $T$ rounds but if the center only sends the original messages (and not e.g.\ an error correcting code) and random faults occur then the center must send each message about $\Omega(\log \Delta)$ times to guarantee all messages are delivered. Likewise consider a complete bipartite graph where each node on the left wants to deliver a message to the right. The existence of collisions in radio networks means that effectively only one node on the left can broadcast per round and, like the star, every node must broadcast about $\Omega(\log \Delta)$ times to deliver its message if there are random faults.
% \begin{definition}[Radio Network Model]
%   \gnote{bla bla}
%   \gnote{define: network, $n$ nodes, diameter $D$, message $m$, messages $M$ message size $\Theta(\log n)$, round}
%   \gnote{faultless vs noisy}
% \end{definition}

% \begin{definition}[Protocol Terminology]
%   A \textbf{protocol length} $T \in \mb{Z}_{\ge 1}$ is the number of rounds the protocol lasts. Specifically, round $1$ is the first one and round $T$ is the final one.
  
%   A \textbf{message history} $H$ of a particular node $v$ at a particular round $t$ consists of all the messages the node $v$ received up to round $t$. Additionally, $H$ also contains local contextual information such as the ID of $v$, IDs of its neighbors, the current round $t$, admissible message sizes, the number of nodes $n$ (correct up to a constant), and possibly some arbitrary private input given to $v$. We denote the collection of all possible message histories by $\mc{H}$.

%   A \textbf{message generator} $g : \mc{H} \to M$ is an arbitrary function that assigns a message to a message history. The set of message generators is denoted by $\mc{G}$.
% \end{definition}

% \begin{definition}[General Protocol] A \textbf{general protocol} $P$ is a function $f : V \times [T] \times \mathcal{H} \to \{ \text{listen} \} \cup M$ where $P(v,t,H) = \text{``listen''}$ indicates the if node $v$ at round $t$ has message history exactly $H$, it will be listening; while $P(v,t,H) = m \in M$ indicates broadcasting the message $m$. A general protocol is defined for both faultless and noisy radio networks.
% \end{definition}

\section{Efficient Simulations for Noisy Radio Networks}
Having defined the noisy radio network model and given intuition, we now prove our simulations.

\subsection{Warmup: Simulation with Local Progress Detection}\label{sec:warmup-progress-detection}

We begin by giving our simulation with $O(\log \Delta)$ multiplicative overhead in the setting where nodes have access to \textbf{local progress detection}. %We use this setting as an opportunity to introduce one of the main proof techniques we make use of when we provide simulations of non-adaptive and general protocols, the ``blaming chain''. 
Our main theorem of the section is as follows.

\cheatingSimulator*

%\enote{Give this protocol a name so we can say "protocol X" successfully simulates... What explicit means is a little bit unclear here.}

%\paragraph{Intuition:}The intuition behind $P'$ is simple: run $P$ but use local progress detection to force nearby nodes to stay locally synchronized. That is, nearby nodes are forced to make a similar amount of progress but far apart nodes are allowed to deviate in terms of the amount of progress they make. The challenge is . Let us consider why a protocol .%However, though the intuition here is straightforward, the existence of collisions in the original protocol complicates instantiating a concrete simulation. 

To rigorously define local progress detection, we introduce our notion of how much simulation progress nodes have made, the \textbf{virtual round}. We say that a node $v$ successfully completed round $t$ if it has succeeded in taking the action in $P$ that it takes in $t$: $v$ successfully broadcasts its message $m$ of round $t$ if every node in $\Gamma(v)$ either has received $m$ or had a collision in the original protocol $P$ at round $t$; $v$ successfully completes its listening action of round $t$ if $v$ receives the message it receives in round $t$ of $P$ or a collision occurs at $v$ in round $t$ of $P$. We formally define the virtual round and local progress detection.

\begin{definition}[Virtual Round]
The virtual round of node $v \in V$ in a given round of simulation $P'$ is the smallest $t_v \in \mb{Z}_{\ge 1}$ such that $v$ has not successfully completed round $t_v$ of $P$.
\end{definition}
\begin{definition}[Local Progress Detection]
In the noisy radio network model with local progress detection every node $v$ knows the virtual round of every node $w \in \Gamma^{(2)}(v)$ in every simulation round.
\end{definition}
Notice that our definition of virtual round allows a node $v$ to learn when a fault occurs at $v$: if $\Gamma(v)$ simulates round $t_v$---which $v$ can compute since it knows the virtual round of nodes in $\Gamma^{(2)}(v)$---and $v$ listens but its virtual round does not increase then $v$ knows a receiver fault occurred at $v$.

%The neighborhood sizes chosen for the definition of virtual rounds may seem arbitrary but we note that smaller radius neighborhoods would not suffice. For instance, if node $v$ only waited until its neighbors were at its virtual round then we could have the following pathological example. Let nodes $a$ and $c$ be connected to $b$; $t_a = t, t_b = t, t_c = t-1$ and assume both $a$ and $c$ are broadcasting.  It is possible that $t_b > t-1$ if, for instance, there was a collision\enote{a fault?} at $b$ at round $t-1$. But now the messages from $a$ and $c$ will collide at $b$ and the advancement of $a$ would depend upon the advancement of $c$ (at distance 2 from it).\enote{It's not clear why $a$'s progress depending on $c$ is such an issue here.}
%\enote{A figure here? I'm not sure this really needs to be stated. Seems like a bit of an aside.}

We now prove the main theorem of this section.
\begin{proof}[Proof of \Cref{thm:global-control-simulation}]
	We begin by describing our simulation, $P'$. Each node $v$ repeatedly does the following in each round of $P'$. Let $u$ be the node in $\Gamma(v)$ with minimal $t_u$. %If $t_u$ is such that $t_u \leq t_w$ for every $w \in \Gamma^{(2)}(u)$ 
	$v$ takes the action that it takes in round $t_u$ of $P$. That is, $v$ tries to ``help'' $u$ by simulating its virtual round. %Otherwise $v$ does nothing.% Let $A$ be the action of node $v$ at round $t_v$ in the original protocol. If $A$ is broadcast and $\exists w \in \Gamma^{(2)}(v)$ such that $t_w < t_v$, then $v$ does nothing this simulation round. If $A$ is broadcast and $\not \exists w \in \Gamma^{(2)}(v)$ such that $t_w < t_v$ then, $v$ broadcasts its message. If $A$ is listen and $\exists w \in \Gamma^{(2)}(v)$ such that $t_w < t_v$ $v$ does nothing this simulation round. Otherwise, $v$ listens.
	
Let $v \in V$ and note that the virtual round $t_v$ never decreases. Hence for $x \in \mb{Z}_{\ge 1}$ we can define $D_{v, x}$ as the earliest round in $P'$ when $t_v \ge x$. Notice that $D_{v, T}$ just is the number of rounds that $v$ takes to simulate all rounds of the original protocol $P$. Thus, it will suffice for us to argue that $D_{v, T}$ is not too large for every $v$. We begin by finding a recurrence relation on $D_{v, x}$ saying that $v$ will advance soon after its 2-hop neighborhood $\Gamma^{(2)}(v)$ has virtual round at least $x$.
  
Consider when $t_v = x-1$. If the action of $v$ at round $x-1$ in $P$ is to broadcast, then once $v$'s virtual round is minimal in $\Gamma^{(2)}(v)$, all of its neighbors will simulate its round. This takes $\max_{w \in \Gamma^{(2)}(v)} D_{w, x-1}$ rounds. Once all nodes in $\Gamma^{(2)}(v)$ have virtual round at least $x-1$, every such node in $\Gamma(v)$ is simulating round $x-1$. By definition of $P'$, $v$ then broadcasts until all of its neighbors that are not incident to a collision in round $x-1$ of $P$ receive $v$'s message without a fault occurring. Any neighbor of $v$ that is sent a message without collision in round $x-1$ in $P$ will now take $\mc{G}(p)$ rounds, a geometric random variable with constant expectation. Therefore, the number of rounds until $v$ has sent all its relevant neighbors its message from round $x-1$ of $P$ is at most $Y_{v, x} := \max \{\mc{G}_1(p), \mc{G}_2(p), \ldots, \mc{G}_{|\Gamma(v)|}(p)\}$, where $\mc{G}_i$ are IID geometric random variables with constant expectation. Finally, we conclude that $D_{v, x} = \max_{w \in \Gamma^{(2)}(v)} \left( D_{w, x-1}\right) + Y_{v,x}$, again, where $Y_{v,x}$ is the maximum of at most $\Delta$ many geometric random variables with constant expectation.

  If the action of $v$ at round $x-1$ of $P$ is to receive a message without collision then, after every node in $\Gamma^{(2)}(v)$ has virtual round at least $x-1$, the neighbor of $v$ that sends $v$ a message without collision in round $x-1$ of $P$ will repeatedly send this message to $v$ until $v$ receives it without a receiver fault. It takes $\max_{w \in \Gamma^{(2)}(v)} D_{w, x-1}$ rounds of simulation for $v$ to have the smallest virtual round in its neighborhood and it takes $\mc{G}(p)$ rounds of simulation for $v$'s neighbor to send it a message. Thus, in this case we have $D_{v, x} = \max_{w \in \Gamma^{(2)}(v)} \left( D_{w, x-1}\right) + \mc{G}(p)$.

Since in the case when $v$ is broadcasting $D_{v, x}$ is larger than the case in which $v$ is listening, we have $D_{v, x} \le \left(\max_{w \in \Gamma^{(2)}(v)} D_{w, x-1}\right) + Y_{v,x}$ where $Y_{v,x}$ is the maximum of at most $\Delta$ many geometric random variables. 

We now prove a tail bound for $D_{v, x}$ by union bounding over all ``blaming chains.'' A blaming chain for $D_{v,x}$ is a sequence of nodes paired with simulated rounds that could explain why $D_{v,x}$ is as large as it is. In particular, a blaming chain $C$ for $D_{v,x}$ is a sequence of (node, round) tuples $(v_x, x), (v_{x-1}, x-1), \ldots, (v_1,1 )$ where $v_{i} \in \Gamma^{(2)}(v_{i-1})$ and $v_x = v$. We let $\mathcal{C}(D_{v,x})$ stand for all such blaming chains of $D_{v,x}$. We say $|C| := \sum_{(v_i, i) \in C} Y_{v_i, i}$ is the length of blaming chain $C$.  Note that $|\mathcal{C}(D_{v,x})|$ is at most $(\Delta^2)^{x-1}$.
 Also notice that $D_{v, x}$ just is the length of the longest blaming chain in $\mathcal{C}(D_{v, x})$. See \Cref{fig:BC}. Thus, we have that
\begin{figure}
	\centering
	\begin{subfigure}[t]{0.3\textwidth}\centering
		\includegraphics[scale=.23]{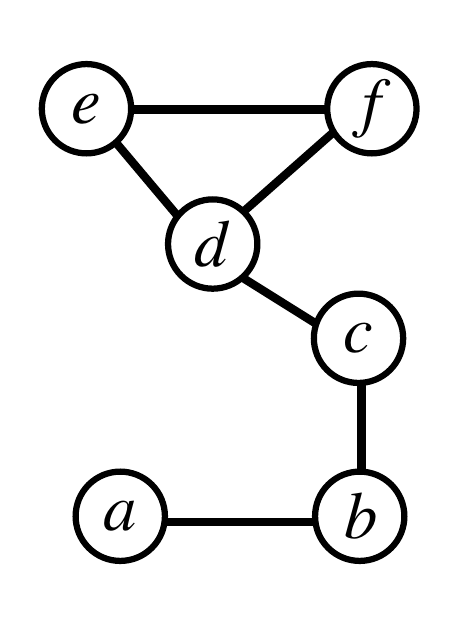}\caption{Network $G$}
	\end{subfigure}
	\begin{subfigure}[t]{0.49\textwidth}
		\includegraphics[scale=.3]{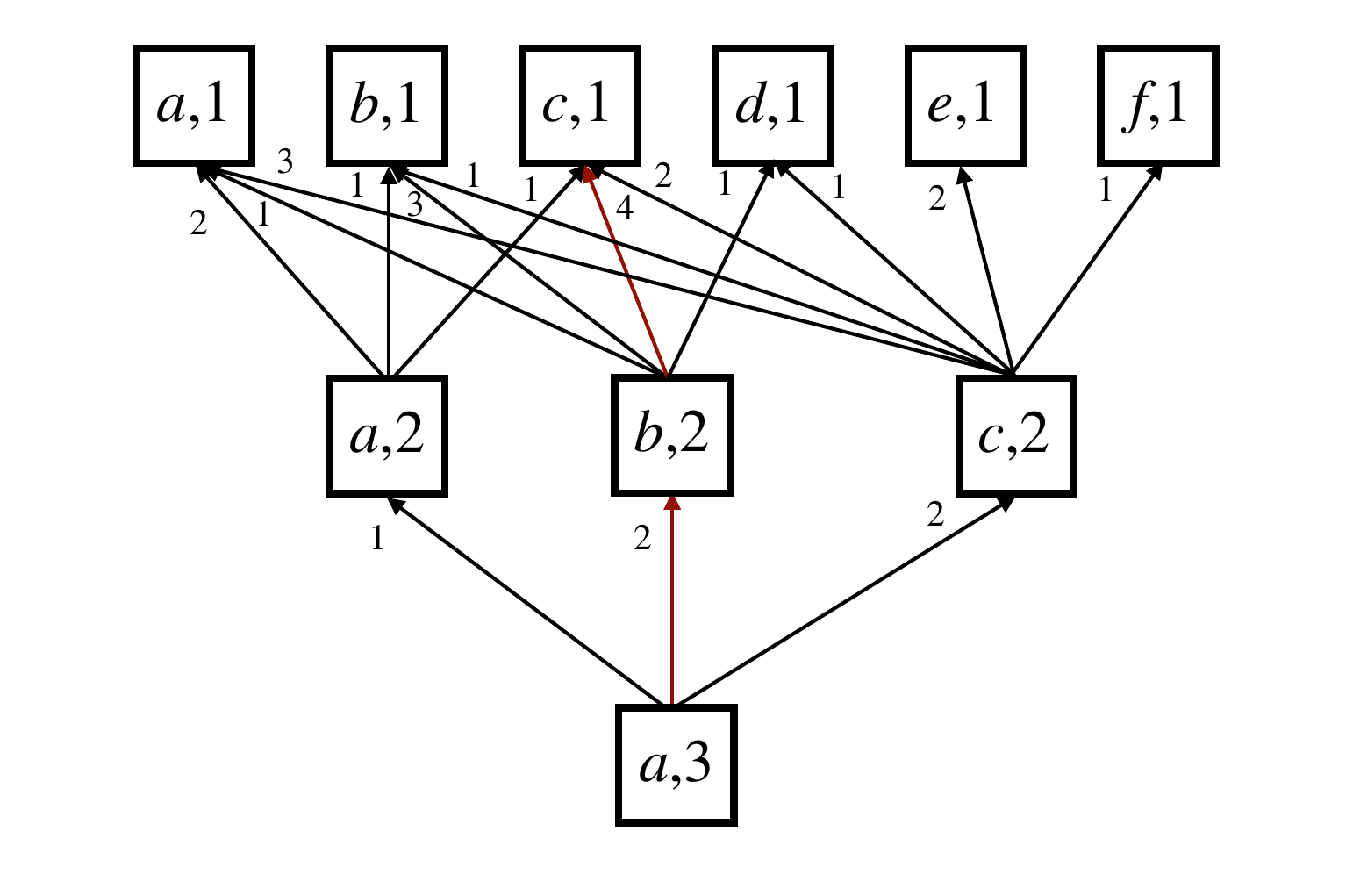}\caption{Blaming chains, $\mathcal{C}(D_{a, 2})$}
	\end{subfigure}
	\caption{$C \in \mathcal{C}(D_{a, 3})$: path away from $(a, 3)$. Edge pointing to $(v, x)$ labeled by the value of $Y_{v, x}$. $D_{a, 3} = \max_{C \in \mathcal{C}}(D_{a, 3}) |C| = 6$. Blaming chain that gives the value of $D_{a, 3}$: red.  }
	\label{fig:BC}
\end{figure}
\begin{align*}
D_{v, x} &= \max_{w \in \Gamma^{(2)}(v)} D_{w, x-1} + Y_{v, x}\\
& = \max_{C \in \mathcal{C}(D_{v,x})} \sum_{v_i \in C} Y_{v, i}\\
&= \max_{C \in \mathcal{C}(D_{v,x})}|C|
\end{align*}

\noindent As such, to show an upper bound on $D_{v, x}$ it suffices to show that no blaming chain is too long. Notice that the length of a blaming chain in $\mathcal{C}(D_{v,x})$ just is the sum of $x$ random variables each of which is the max of at most $\Delta$ geometric random variables with constant expectation. In \Cref{lem:chernoff-sum-of-max} of \Cref{sec:tail-bounds} we prove a Chernoff-style tail bound for such sums, showing that for any given $C \in \mathcal{C}(D_{v, x})$ we have that $\Pr[|C| \ge c(x \log \Delta + k)] \le \exp(-k)$ for some constant $c > 0$. Letting $x = T$ and taking a union bound over all $C \in \mathcal{C}(D_{v,T})$ tells us that for a fixed $v$ we have $\Pr[\exists C \in \mathcal{C}(D_{v,T}) \text{ s.t. } |C| \ge c(T \log \Delta + k)] \le |\mathcal{C}(D_{v,T})| \cdot \exp(-k) \leq (\Delta^2)^{T-1} \cdot \exp(-k)$. Thus with probability at most $(\Delta^2)^{T-1} \cdot \exp(-k)$ a chain in $\mathcal{C}(D_{v,x})$ is of length more than $c(T \log \Delta + k)$ and so $D_{v, T}$ is of length more than $c(T \log \Delta + k)$ with probability at most $(\Delta^2)^{T-1} \cdot \exp(-k)$.

Taking a union bound over every vertex we have that there exists a vertex $v$ such that $D_{v, T}$ is more than $c(T \log \Delta + k)$ with probability at most $n(\Delta^2)^{T-1} \cdot \exp(-k)$. Setting $k \gets \ln n + (2T-2) \ln \Delta + k'$ we have that there exists a vertex $v$ such that $D_{v, x}$ is more than $O(T \log \Delta + \log n)$ is at most $\exp(k')$. Thus, we have that every vertex with probability at least $1 - \exp(k')$ has completed the protocol after $O(T \log \Delta + \log n)$ rounds of $P'$. 
  
\end{proof}

\subsection{Simulation for Non-Adaptive Protocols}\label{sec:static-simulators}
We now give our simulation results for non-adaptive protocols with $\poly (\log \Delta, \log \log n)$ multiplicative round overhead. Formally a non-adaptive protocol is as follows.

\begin{definition}[Non-Adaptive Protocol]
 A non-adaptive protocol $P$ is a protocol in which every node can determine if it receives a message in each round of $P$ regardless of the private inputs of the nodes. Formally, each node $v$ is given a list, $M_v$, of the rounds in which it receives a message without collision in $P$.
\end{definition}
\noindent We let $M_v.\textsc{getNextRound}$ return the smallest round in $M_v$ such that for every $r \in M_v$ such that $r < M_v.\textsc{getNextRound}$, $v$ has received the message it receives in round $r$ of $P$. Since in a non-adaptive protocol nodes know in which rounds they receive messages in $P$, nodes in a simulation can locally compute $M_v.\textsc{getNextRound}$. Also notice that even though non-adaptive protocols assume nodes know a priori when they are supposed to receive messages, nodes are oblivious of the contents of these messages. We focus on non-adaptive protocols to focus on challenges 1 and 2 mentioned in \Cref{sec:intro}.

However, we also note that a number of well-known protocols are non-adaptive.
%Any protocol in which nodes can compute the broadcast schedule of their neighbors is static. For many protocols this computation can be done provided nodes know the network topology and public randomness is used. 
For instance, assuming nodes know the network topology and use public randomness, the optimal broadcast algorithm of \citet{gkasieniec2007faster} is non-adaptive. In particular, under these assumptions nodes can compute the broadcast schedule of their neighbors for this protocol. The same is true of the Decay protocol of~\citet{bar1992time}. Since broadcast is the most studied problem in radio networks, the fact that the state-of-the-art broadcast algorithm is non-adaptive would seem to make studying non-adaptive protocols worthwhile. Moreover, though requiring that nodes know the network topology and use public randomness may seem restrictive, in the context of simulations this assumption is actually quite weak: we can dispense with both assumptions by simply running any primitive that informs nodes of the network topology or shares randomness before our simulation is run at a one-time additive round overhead. Any primitive to learn the network topology or share randomness from the classic radio network setting coupled with the simple $O(\log n)$ simulation as described in \Cref{sec:intro} suffices here. This overhead is negligible if the simulated protocol is sufficiently long or we run many simulations on our network. Lastly, we note that it is often the case that nodes can even \emph{efficiently} compute the broadcast schedule of their neighbors---see \citet{haeupler2016faster}---and so this one-time cost is often quite small. Broadly speaking, then, any protocol in which knowing the topology and sharing randomness allows nodes to compute their neighbors' broadcast schedule is non-adaptive.

%In particular, nodes can partition the network into balls of diameter $O(\log n)$ and construct broadcasting trees of \citet{gkasieniec2007faster} of their balls in $\log^{O(1)} n$ (additive) rounds. The protocol corresponding to communication inside the ball can be made non-adaptive since nodes locally know their topology sufficiently well; and communication across ball boundaries can be made robust to noise by repeating any such communication $O(\log n)$ times since it only represents a $O(1/\log n)$-fraction of communication. For more details we refer the reader to \citet{haeupler2016faster}.\enote{I don't think this is detailed enough to make sense and really getting into it is distracting}

We now state the main theorem of this section and proceed to describe how we prove it.

\staticSimulationTheorem*

To prove our theorem we build upon the idea of the preceding section of using virtual rounds to locally synchronize nodes. However, in the current setting it is difficult for nodes to confirm when their broadcasts have succeeded, and so we must relax our definition of virtual rounds. In particular, for the remainder of this section we let the \textbf{virtual round} of each node be the minimum between the largest $t_v \in \mathbb{Z}_{\ge 1}$ such that $v$ receives a message without collision in $t_v$ and $v$ has successfully \emph{received} all messages it receives up to round $t_v - 1$ in $P$ in our simulation and a throttling variable $L$. That is, the virtual round of $v$ is $\min(M_v.\textsc{getNextRound}, L)$. An important property of virtual rounds in this setting is that given $L$ a node can always compute its virtual round. As discussed in \Cref{sec:intution}, this throttling will allow us to keep the range of our binary search small.

We now present the \textsc{MainNonAdaptive} simulation routine that simulates $P$ in the noisy setting (\Cref{A:mainStatic}). Its main subroutine, \textsc{LearnDelays} (\Cref{alg:LearnDelays}) performs a binary search to inform each node of the virtual round of its most delayed neighbor. In particular, in \textsc{LearnDelays} nodes with a virtual round less than the mean of their binary search range are ``active''. If there is an active node within 0 or 1 hop of node $v$ then node $v$ lowers the upperbound of its binary search. If there are no active nodes within $2$ hops of $v$ then it raises its binary search lowerbound. Otherwise there is an active node within 2 hops---which intuitively means that $v$ is not most delayed in $\Gamma^{(2)}(v)$---then node $v$ remains silent for the rest of the binary search so as to not interfere with the binary search of its neighbors. See \Cref{fig:BS} for an illustration of \textsc{LearnDelays}.

Lastly, nodes learn the distance of their nearest active nodes using helper subroutines (i) \textsc{DistToActive} (\Cref{A:DistToActive}) which, given a set of active nodes, checks for each $v$ if there is a node within distance $2$ of $v$ that is active, and (ii) \textsc{Broadcast} (\Cref{A:Broadcast}) which spreads a message to all neighbors of a node. 

We give each routine in pseudocode along with the lemma which gives its properties. 
%We defer the proofs of these lemmas to \Cref{sec:deferred-proofs}. 
Throughout our pseudocode we let $P(v, t)$ return the action taken by node $v$ in round $t$ of $P$. \fullOnly{We end this section with the proofs of our lemmas and theorem.}

\begin{algorithm}
  \caption{$\textsc{MainNonAdaptive}$ for node $v$}
  \label{A:mainStatic}
\begin{algorithmic}
  \State $t_v \gets 1$
  \For{$L = 1, 2, \ldots, T + O(\log n)$} \Comment{``Outermost iteration''}
    \For{repeat $O(\log \Delta)$ times} \Comment{``Innermost iteration''}
      \State $m_v \gets$ \textsc{LearnDelays}
      \State do action $P(v, m_v)$
      \If {$m_v = M_v.\textsc{getNextRound}$ and $v$ received a message}  
      \State $t_v \gets \min(L, M_v.\textsc{getNextRound})$ \Comment{Throttle $v$}
      \EndIf
     
      %\State $t_v \gets \min(t_v, L)$ \Comment{throttle $v$'s progress}
    \EndFor
  \EndFor
\end{algorithmic}
\end{algorithm}

\begin{restatable}{lemma}{LmainRoutine}
  \label{L:main-routine}
  Assume that $L - O(\log n) \le
  t_v < L$ for all $v \in V$ and let $v$ be a most delayed node in its 2-hop neighborhood. After an innermost iteration of \textsc{MainNonAdaptive}, $t_v$ will increase by one with at least constant probability. Moreover, the running time of each innermost iteration is $O(\log^2 \Delta \log \log n \log \log \log n)$ rounds.
\end{restatable}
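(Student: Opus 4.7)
The plan is to reduce everything to the (separately-established) guarantee of \textsc{LearnDelays}: with high probability, every node $u$ computes $m_u = \min_{w\in\Gamma(u)} t_w$ in $O(\log^2\Delta\,\log\log n\,\log\log\log n)$ rounds. Conditioning on this event, the key observation is that every $w\in\Gamma(v)$ satisfies $m_w = t_v$. This holds because $\Gamma(w)\subseteq\Gamma^{(2)}(v)$ whenever $w\in\Gamma(v)$, and $v$ is assumed most-delayed on $\Gamma^{(2)}(v)$; hence the minimum over $\Gamma(w)$ is realized at $u=v$. In particular $v$ itself computes $m_v = t_v$, so in the subsequent ``do action $P(\cdot,m_\cdot)$'' step, $v$ and all of its neighbors simulate exactly round $t_v$ of the original protocol.

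Next I would invoke the throttling hypothesis. The bound $t_v < L$ forces $t_v = M_v.\textsc{getNextRound}$, so by definition of $M_v$ there is a unique broadcaster $u^* \in \Gamma(v)$ in round $t_v$ of $P$ and $v$ is listening; any second broadcasting neighbor of $v$ would cause a collision and contradict $t_v \in M_v$. Hence during this innermost iteration $u^*$ alone transmits to $v$, and with probability $1-p = \Omega(1)$ no receiver fault occurs at $v$, so $v$ receives exactly the message it would receive in round $t_v$ of $P$. Both clauses of the \textbf{if}-guard then hold (since $m_v = t_v = M_v.\textsc{getNextRound}$), and the update $t_v \gets \min(L, M_v.\textsc{getNextRound})$ strictly advances $t_v$ because the new value of $M_v.\textsc{getNextRound}$ strictly exceeds the old $t_v$ and $t_v < L$. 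The runtime bound follows immediately: one call to \textsc{LearnDelays} plus a single action round matches the stated $O(\log^2\Delta\,\log\log n\,\log\log\log n)$ bound.

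The main obstacle sits outside this lemma, in the analysis of \textsc{LearnDelays} itself—this is where the throttling hypothesis $L - O(\log n)\le t_v$ really earns its keep, as it bounds the effective binary-search range and produces the $\log\log n$ factor in the claimed runtime. Within this lemma the delicate point is that we need $v$ to be most-delayed in the \emph{2-hop} (not 1-hop) neighborhood: if we only required minimality on $\Gamma(v)$, some neighbor $w$ could see a further-lagging node outside $\Gamma(v)$, end up computing $m_w < t_v$, and simulate a smaller round—potentially transmitting to $v$ and spoiling the uniqueness of the broadcaster $u^*$ on which the constant-probability advance crucially depends.
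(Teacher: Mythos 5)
Your argument is essentially the paper's own proof: invoke \textsc{LearnDelays} so that every $w \in \Gamma(v)$ ends up with $m_w = t_v$ and hence simulates round $t_v$, use $t_v < L$ to deduce $t_v = M_v.\textsc{getNextRound}$ so that $v$ listens while its unique round-$t_v$ broadcaster from $P$ transmits, conclude the advance with probability $1-p = \Omega(1)$, and charge the runtime to one \textsc{LearnDelays} call plus one action round. The only blemish is that you credit \textsc{LearnDelays} with a stronger guarantee than it has (w.h.p.\ every node $u$ gets $m_u = \min_{w \in \Gamma(u)} t_w$), whereas the paper's lemma only yields, with constant probability, $m_w = t_v$ for the neighbors of a node $v$ that is most delayed in its 2-hop neighborhood---but since that is precisely the event you actually use and constant probability suffices for the claim, the proof goes through as the paper's does.
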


We now give our helper routine, \textsc{LearnDelays}, which informs nodes of their most delayed neighbors using a distributed binary search. %The main challenge of performing a binary search in this setting is that nodes may deviate in terms of the ranges over which they are searching. This can cause nodes to interfere with the binary search of their neighbors. The key idea we use to overcome this issue is to carefully mark nodes ``silent'' if they might interfere with another node's binary search. 

\captionsetup[subfigure]{labelformat=empty}
\begin{figure}
	\centering
	\begin{subfigure}{0.24\textwidth}
		\includegraphics[scale=.23]{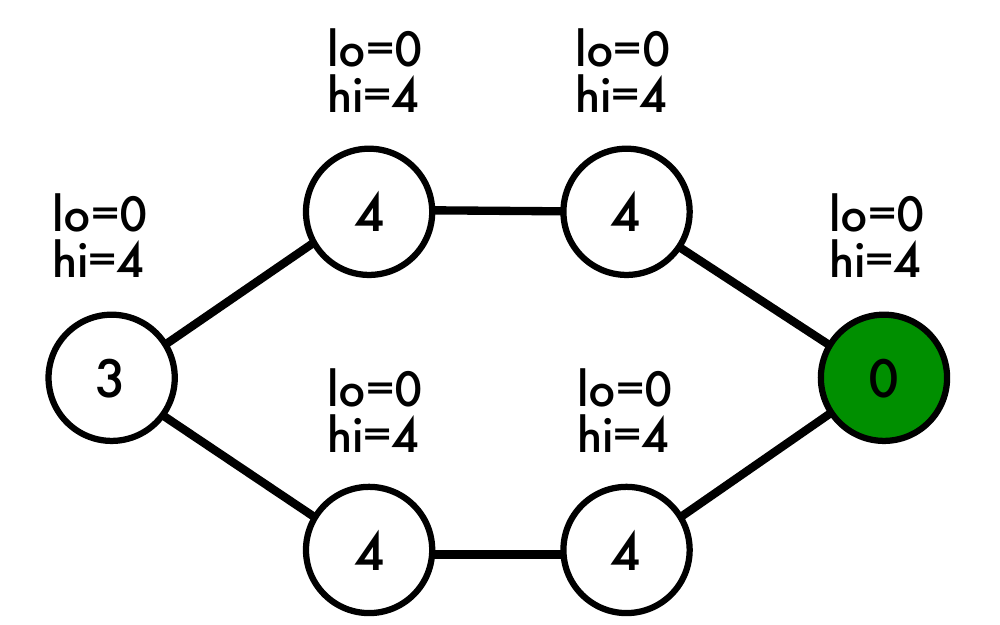}\caption{Iteration 0}
	\end{subfigure}
	\begin{subfigure}{0.24\textwidth}
		\includegraphics[scale=.23]{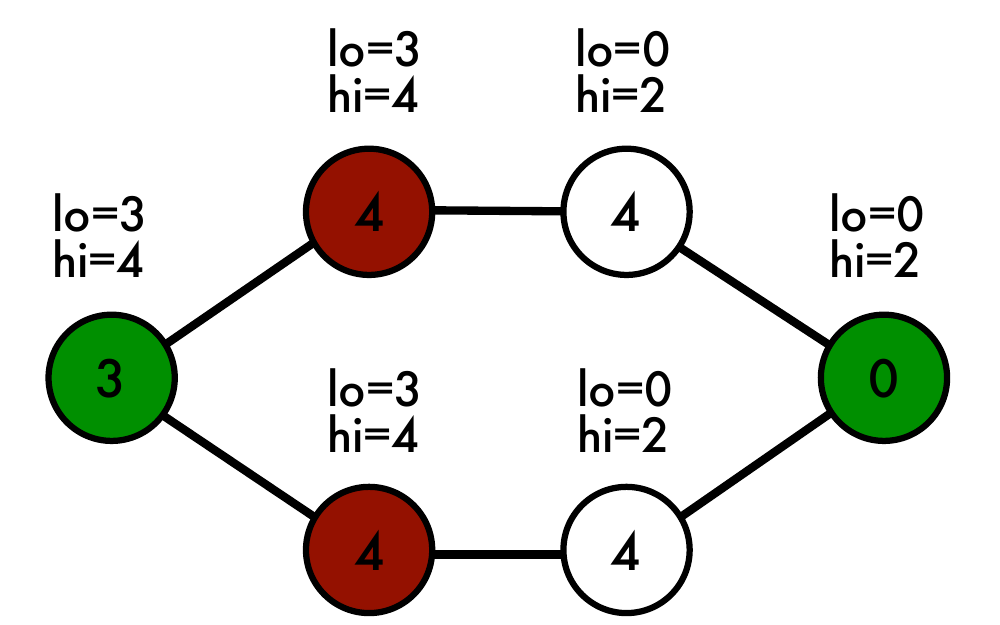}\caption{Iteration 1}
	\end{subfigure}
	\begin{subfigure}{0.24\textwidth}\centering
		\includegraphics[scale=.23]{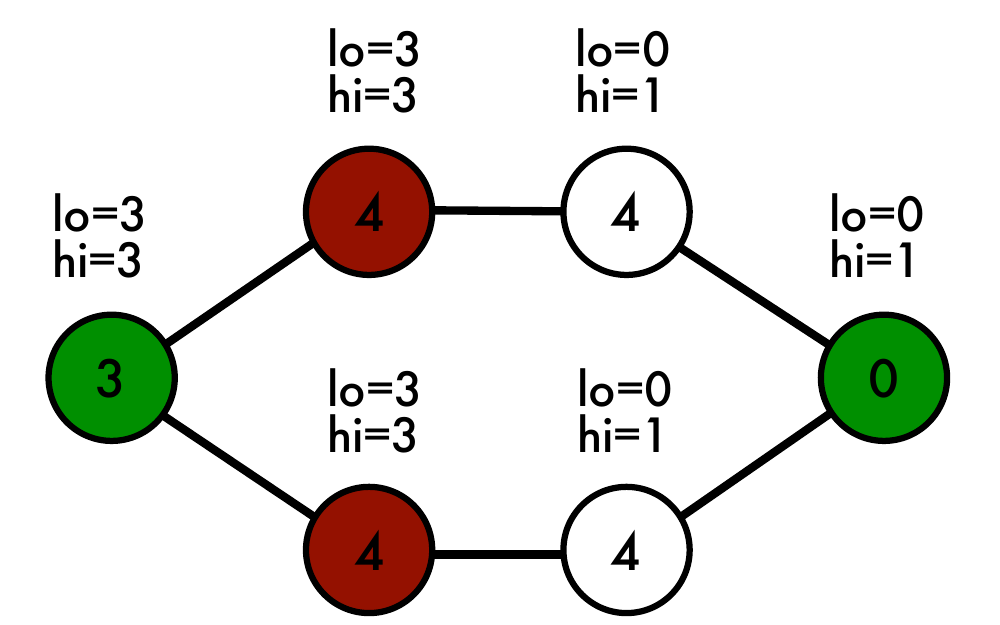}\caption{Iteration 2}
	\end{subfigure}
	\begin{subfigure}{0.24\textwidth}\centering
		\includegraphics[scale=.23]{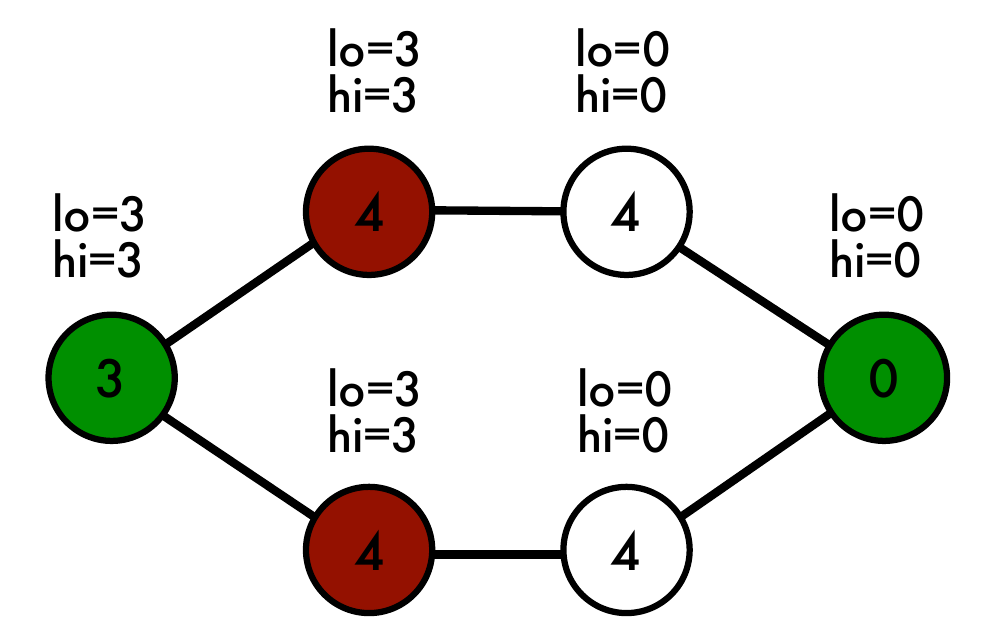}\caption{Iteration 3}
	\end{subfigure}
	\caption{\textsc{LearnDelays} assuming \textsc{DistToActive} always succeeds. Nodes labeled with rescaled virtual rounds (i.e.\ $v$ labeled with $t_v - L + O(\log n)$). Silent nodes: red. Active nodes: green. Far left and far right nodes: minimal in their two-hop neighborhoods so their neighbors learn their delays.}\label{fig:BS}
\end{figure}

\begin{algorithm}
\caption{\textsc{LearnDelays} for node $v$}
\label{alg:LearnDelays}
\begin{algorithmic}
  \Require{$L$, $t_v$}
  \State $lo \gets L - O(\log n)$; $hi \gets L$
  \While{$lo \neq hi$} \Comment{repeats $O(\log \log n)$ rounds} {
    \State $v$ is marked as "active" iff $t_v \le \lfloor \frac{lo+hi}{2} \rfloor$ and $v$ is not marked ``silent''
    \State dist $\gets$ \textsc{DistToActive}
    \State \textbf{if} dist is ``=2'' \textbf{then} mark $v$ as "silent" until the end of \textsc{LearnDelays}
    \If{dist is ``=0'' or ``=1''} \State $hi \gets \lfloor \frac{lo+hi}{2} \rfloor$
    \Else \State  $lo \gets \lfloor \frac{lo+hi}{2} \rfloor + 1$
    \EndIf
  }
  \EndWhile
  \State \Return{$lo$}
\end{algorithmic}
\end{algorithm}

\begin{restatable}{lemma}{LlearnDelays}
  \label{L:LearnDelays}
  Assume that $L - O(\log n) \le t_v \le L$ for all $v \in V$ and let $v$ be a most delayed node in its 2-hop neighborhood. After \textsc{LearnDelays} terminates, each 1-hop neighbor $w$ of $v$ will have $m_w = t_v$ with at least constant probability, i.e., every $w$ will try to help $v$ advance. The running time of \textsc{LearnDelays} is $O(\log^2 \Delta \log \log n \log \log \log n)$ rounds.
\end{restatable}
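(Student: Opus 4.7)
The plan is to prove \Cref{L:LearnDelays} by separately analyzing the running time of the binary search and its correctness, relying on a yet-to-be-stated correctness and time bound for \textsc{DistToActive}.

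For the running time, the initial search range $[L - O(\log n), L]$ has width $O(\log n)$, so the while loop halves this range and terminates after $O(\log \log n)$ iterations. I would argue that \textsc{DistToActive}, in its basic form, runs in $O(\log^2 \Delta)$ rounds with failure probability at most some constant, and that running $O(\log \log \log n)$ independent repetitions per iteration amplifies the failure probability to $O(1/\log\log n)$ per iteration. A union bound over the $O(\log\log n)$ iterations then gives a constant-probability guarantee that every \textsc{DistToActive} call returns correctly, which is the source of the final ``constant probability'' in the statement. The total running time follows as $O(\log \log n) \cdot O(\log^2 \Delta \log \log \log n) = O(\log^2 \Delta \log \log n \log \log \log n)$.

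For correctness, I condition on the event (of constant probability) that every \textsc{DistToActive} call succeeds throughout, and prove by induction on the iteration count that $w$'s local variables satisfy the invariant $lo_w \le t_v \le hi_w$ and that $w$ is never marked silent. The two cases to check in each iteration, with $m = \lfloor (lo_w+hi_w)/2 \rfloor$, are: (i) if $m \ge t_v$, then since $v \in \Gamma(w)$ and $v$ marks itself active (because $t_v \le m$, under the invariant that $v$'s own search has identical $[lo, hi]$ by symmetry and the inductive hypothesis applied at $v$), $w$ sees distance $\le 1$ and correctly sets $hi_w \leftarrow m$; (ii) if $m < t_v$, then because $v$ is most delayed in $\Gamma^{(2)}(v)$, every node $u \in \Gamma^{(2)}(v)$ has $t_u \ge t_v > m$ and so is not active, hence $w$ sees distance $> 2$ and correctly sets $lo_w \leftarrow m+1$. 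After $O(\log\log n)$ iterations the two endpoints meet at $t_v$, giving $m_w = t_v$ as required.

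The main obstacle is handling case (ii) rigorously, because $\Gamma^{(2)}(w)$ is not contained in $\Gamma^{(2)}(v)$: there may exist a node $u$ at distance $3$ from $v$ (and distance $2$ from $w$) with $t_u < t_v$, which could in principle remain active when $m$ drops below $t_v$ and cause $w$ to be silenced. I would address this by strengthening the inductive invariant so that it is maintained jointly for all nodes in a ball of radius $2$ around $v$: namely, every such node either (a) has the same $[lo, hi]$ interval as $v$ and has not been silenced, or (b) has already been silenced in a consistent manner that does not cause $v$ or any neighbor of $v$ to be silenced. This preserves the property that the active set $\{u : t_u \le m\}$ used in \textsc{DistToActive} is exactly the ``true'' active set from $w$'s perspective, and exploits the assumption $L - O(\log n) \le t_u \le L$ to bound the range over which nodes can possibly disagree. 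Combined with the success of \textsc{DistToActive}, this yields the constant-probability correctness claim.
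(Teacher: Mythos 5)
Your overall skeleton is the same as the paper's (condition on \textsc{DistToActive} answering correctly, induct over the $O(\log\log n)$ binary-search iterations, and use the minimality of $t_v$ in $\Gamma^{(2)}(v)$), but the invariant you propose to induct on is not provable as stated, and the repair you sketch misses the actual dangerous case. It is simply false that a $1$-hop neighbor $w$ of $v$ is never silenced: a node $y$ at distance $3$ from $v$ and distance $2$ from $w$ may have $t_y<t_v$ (minimality of $v$ only constrains $\Gamma^{(2)}(v)$), be active when the common midpoint is below $t_v$, and silence $w$. This, however, is harmless, since being silenced never changes how $(lo,hi)$ are updated --- a ``$=2$'' answer still takes the else-branch --- so the correct invariant (the paper's) drops any silence claim about $\Gamma(v)$ and asserts only that (1) $v$ itself is never silenced, (2) nodes of $\Gamma(v)$ never deviate from $v$'s $(lo,hi)$, and (3) any node of $\Gamma^{(2)}(v)\setminus\Gamma(v)$ that ever deviates is never active in any later iteration. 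Your strengthened 2-ball dichotomy (``same interval and not silenced, or silenced consistently'') does not cover the case that property (3) exists for: a node $u$ at distance exactly $2$ from $v$ can deviate \emph{without} being silenced, by seeing an active node at distance $\le 1$ of itself (at distance $3$ from $v$) while $v$ sees none; such a $u$ could later be spuriously active inside $\Gamma(w)$ when $v$'s midpoint is below $t_v$ and wrongly pull $hi_w$ below $t_v$, breaking your containment invariant. The paper rules this out not via silencing but by observing that in this case $u$ sets $hi\gets\lfloor(lo+hi)/2\rfloor < t_u$, and since $hi$ is non-increasing, $u$ can never again satisfy its activity test. Conversely, the obstacle you single out (far nodes silencing $w$) is a non-issue for the lemma.

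There is also a quantitative gap in your probability accounting. Repeating a constant-failure \textsc{DistToActive} $O(\log\log\log n)$ times only drives the per-call failure down to $\exp(-\Theta(\log\log\log n)) = (\log\log n)^{-\Theta(1)}$, with no dependence on $\Delta$. But your own conditioning (and any correct version of the induction) needs correct answers simultaneously at \emph{all} nodes of $\Gamma^{(2)}(v)$ --- up to $\Delta^2$ of them --- in all $O(\log\log n)$ iterations, because correctness at the distance-$2$ nodes is exactly what prevents them from deviating and becoming spuriously active; a union bound over these $\Theta(\Delta^2\log\log n)$ calls with failure $(\log\log n)^{-\Theta(1)}$ per call is vacuous. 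The paper builds the needed strength into the primitive: $O(\log\Delta\,\log\log\log n)$ repetitions of a constant-success Decay-style round give per-call failure $O\bigl(1/(\Delta^2(\log\log n)^2)\bigr)$ at the same $O(\log^2\Delta\,\log\log\log n)$ cost, which is what makes the union bound (and hence the final constant success probability) go through. With the invariant corrected as above and the amplification re-parameterized, your argument becomes essentially the paper's proof.
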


We now give \textsc{DistToActive}, the main helper routine for our distributed binary search routine.
\begin{algorithm}
  \caption{\textsc{DistToActive} for node $v$}
  \label{A:DistToActive}
\begin{algorithmic}
  \Require{is $v$ active?}
  \Ensure{``=0'', ``=1'', ``=2'', or ``>2'' as the smallest distance to an active node}
  \State \algorithmicif\ $v$ is active \algorithmicthen\ \textsc{Broadcast} an arbitrary message X \algorithmicelse\ stay silent
  \State \algorithmicif\ $v$ received X \algorithmicthen\ \textsc{Broadcast} an arbitrary message Y \algorithmicelse\ stay silent
  \State \algorithmicif\ $v$ is active \Return ``=0'' 
  \State \algorithmicif\ $v$ received X \Return ``=1'' 
  \State  \algorithmicif\ $v$ received Y \Return ``=2''
  \State \Return ``>2'' 
\end{algorithmic}
\end{algorithm}

\begin{restatable}{lemma}{LdistToActive}
  \label{L:DistToActive}
  Let $A \subseteq V$ be any set of active nodes. After \textsc{DistToActive} terminates, a fixed node $v$ correctly learns its distance to the nearest active node as 1,2, or more than 2 with probability at least $1 - O\left(\frac{1}{\Delta^2 (\log \log n)^2} \right)$. If $v$'s distance to the nearest active node is 0 then $v$ learns it as such with probability $1$. The running time of \textsc{DistToActive} is $O(\log^2 \Delta \log \log \log n)$ rounds.
\end{restatable}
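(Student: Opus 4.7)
The plan is to split the analysis by the true distance $d = \dist(v, A)$ from $v$ to the nearest active node and reduce each case to the guarantees of the \textsc{Broadcast} subroutine. First, the case $d = 0$ is trivial: the check ``\emph{if $v$ is active, return ``=0''}'' fires deterministically and is correct with probability $1$. Next, I would establish the ``no false positives'' direction for $d > 2$ by a purely \emph{structural} (non-probabilistic) argument. If the nearest active node is at distance strictly greater than $2$, then (a) no node of $\Gamma(v)$ is active, so no neighbor of $v$ invokes $\textsc{Broadcast}(X)$, and hence $v$ cannot receive X; and (b) no node of $\Gamma^{(2)}(v)$ is active, so for any neighbor $u$ of $v$, none of $u$'s neighbors broadcast X in phase 1, which means $u$ never receives X and therefore never invokes $\textsc{Broadcast}(Y)$ in phase 2; hence $v$ cannot receive Y. The algorithm therefore falls through to ``>2'' with probability $1$.

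The two remaining cases, $d = 1$ and $d = 2$, are where randomness enters. For $d = 1$, pick any active $w \in \Gamma(v)$. Since $w$ invokes $\textsc{Broadcast}(X)$ in phase 1, the specification of \textsc{Broadcast} (the lemma attached to Algorithm~\ref{A:Broadcast}) guarantees that $v$ receives X with probability at least $1 - O(1/(\Delta^2 (\log \log n)^2))$; because $v \notin A$, the ``=0'' check fails and the algorithm correctly returns ``=1''. For $d = 2$, fix an active $w$ with $\dist(v,w)=2$ and a common neighbor $u \in \Gamma(v) \cap \Gamma(w)$. Correctness then hinges on the conjunction of two events: $u$ receives X in phase 1 (so that $u$ invokes $\textsc{Broadcast}(Y)$ in phase 2), and $v$ receives Y in phase 2. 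Each is delivered by a single invocation of \textsc{Broadcast}, so a union bound yields overall failure $O(1/(\Delta^2 (\log \log n)^2))$, matching the claim. Observe that in both cases $v$ cannot mistakenly receive a spurious X or Y: for $d = 1$, $v$ is indifferent to whether any Y is received (the ``=1'' check precedes the ``=2'' check); for $d = 2$, no neighbor of $v$ is active, so no X can reach $v$.

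The main technical obstacle I anticipate is that phase 1 features \emph{many} simultaneous callers of $\textsc{Broadcast}(X)$: every node of $A$ is a transmitter, potentially including nodes far outside $v$'s neighborhood that could still cause collisions at the relay $u$ in the $d=2$ analysis. This forces the specification of \textsc{Broadcast} to be a multi-source guarantee, namely that whenever at least one neighbor of a target node is invoking the routine, the target receives \emph{some} transmitted message with the stated high probability, regardless of the behavior of other callers elsewhere in the network. Assuming the lemma attached to \textsc{Broadcast} provides precisely such a guarantee in $O(\log^2 \Delta \log \log \log n)$ rounds, the running-time bound on \textsc{DistToActive} is immediate: it consists of exactly two sequential invocations of \textsc{Broadcast} plus $O(1)$ local decisions, for a total of $O(\log^2 \Delta \log \log \log n)$ rounds.
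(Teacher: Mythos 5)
Your proposal is correct and follows essentially the same route as the paper: case-split on $\dist(v,A)$, treat $d=0$ and $d>2$ as deterministic, and reduce $d=1$ and $d=2$ to one or two invocations of \textsc{Broadcast} plus a union bound, noting that the paper's \Cref{L:broadcast} is indeed stated as the multi-source guarantee you require (an arbitrary set $A$ of broadcasters of the same message). Your write-up is merely more explicit than the paper's, which dismisses the $d=0$, $d>2$, and $d=1$ cases as trivial or "similar."
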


Lastly, we give \textsc{Broadcast} which is based on the Decay protocol of~\citet{bar1992time} and spreads information among nodes.
\begin{algorithm}
  \caption{\textsc{Broadcast} for node $v$}
  \label{A:Broadcast}
\begin{algorithmic}
  \Require{a message to broadcast}
  \For{$O(\log \Delta \log \log \log n)$ times} \Comment{Outermost iteration}
  \For{$i = 1, 2, \ldots, O(\log\Delta)$} \Comment{Innermost iteration}
  \State $v$ broadcasts its message with probability $2^{-i}$
  % \State if
  \EndFor
  \EndFor
\end{algorithmic}
\end{algorithm}
\begin{restatable}{lemma}{Lbroadcast}
  \label{L:broadcast}
  Let $A \subseteq V$ be any set of nodes broadcasting the same message, while other nodes $V \setminus A$ are listening. After \textsc{Broadcast} terminates, a node $v$ with at least one broadcasting neighbor will receive the message with probability at least $1 - O\left(\frac{1}{\Delta^2 (\log \log n)^2} \right)$. Nodes without neighbors will not receive any messages. The running time of \textsc{Broadcast} is $O(\log^2 \Delta \log \log \log n)$ rounds.
\end{restatable}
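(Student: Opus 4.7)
The plan is to read \textsc{Broadcast} as a classic Decay protocol \citep{bar1992time} wrapped in an outer repetition loop of $N = O(\log \Delta \log \log \log n)$ independent copies, and argue in two steps: (a) a single outer pass succeeds at any fixed listener $v$ with some absolute constant probability $c > 0$, provided $v$ has at least one broadcasting neighbor, and (b) independence across outer passes amplifies this to the claimed $1 - O(1/(\Delta^2 (\log \log n)^2))$ bound.

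For step (a), I would fix a listener $v$ and set $k := |\Gamma(v) \cap A|$, where $1 \le k \le \Delta$. Since the inner loop iterates $i = 1, 2, \ldots, O(\log \Delta)$ and $k \le \Delta$, there exists some $i^* \in \{1, \ldots, O(\log \Delta)\}$ with $2^{i^*} \in [k, 2k]$. In the round corresponding to $i^*$, each of the $k$ broadcasting neighbors of $v$ independently transmits with probability $p_{i^*} := 2^{-i^*} \in [1/(2k), 1/k]$, so
\[ \Pr[\text{exactly one neighbor of } v \text{ transmits}] = k p_{i^*} (1 - p_{i^*})^{k-1} \ge \tfrac{1}{2}\left(1 - \tfrac{1}{k}\right)^{k-1} \ge \tfrac{1}{2e}, \]
using $k p_{i^*} \ge 1/2$ and the standard inequality $(1 - 1/k)^{k-1} \ge 1/e$ (the $k=1$ case handled directly). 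Conditioning on exactly one broadcaster, $v$ avoids a receiver fault with probability $1-p$, so a single outer pass succeeds at $v$ with probability at least $c := (1-p)/(2e) > 0$.

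For step (b), since the outer passes use fresh independent randomness, the probability that $v$ fails across all $N$ of them is at most $(1-c)^N \le e^{-cN}$. Choosing the hidden constant in $N$ large enough forces $cN \ge 2\log\Delta + 2\log\log\log n$, which yields failure probability at most $1/(\Delta^2 (\log\log n)^2)$ (using $\log\Delta \cdot \log\log\log n \ge \tfrac{1}{2}(\log\Delta + \log\log\log n)$ whenever both factors are $\ge 1$; the remaining edge cases are trivial). The total running time is $N \cdot O(\log\Delta) = O(\log^2 \Delta \log\log\log n)$, matching the stated bound, and the second assertion -- that nodes with no broadcasting neighbors receive nothing -- is immediate from the protocol since only nodes in $A$ ever transmit. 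I do not anticipate a real obstacle: the argument is a textbook Decay analysis followed by an independent-trials amplification, and the only delicate step is fixing the hidden constants so that the amplification crosses the $1/(\Delta^2(\log\log n)^2)$ threshold, which the calculation above isolates.
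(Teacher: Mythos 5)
Your proof is correct and follows essentially the same route as the paper's: the paper's ``easy calculation'' for the inner iteration matching the dyadic scale of $|\Gamma(v)\cap A|$ is exactly your Decay-style bound of $k p_{i^*}(1-p_{i^*})^{k-1}\ge 1/(2e)$ times the constant no-fault probability, followed by the same amplification over the $O(\log\Delta\log\log\log n)$ independent outer passes. The only (immaterial) difference is that you carry the no-fault factor as $1-p$ while the paper loosely writes it as $p$; both are constants, so the conclusions coincide.
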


\gdef\ProofLbroadcast{
\begin{proof}[Proof of \Cref{L:broadcast} (\textsc{Broadcast})] % broadcast proof
Call nodes in $A$ \textbf{informed} and let $m$ be the message known by nodes in $A$. Let the number of informed neighbors of $v$ be $[2^k, 2^{k+1}\rangle$ for $k \le \log_2 \Delta$. An easy calculation shows that during the innermost iteration when $i = k$, $v$ will receive the message with probability $\Omega(1) \cdot p$. Since $p = \Omega(1)$, we have proven $v$ receives $m$ with constant probability in each outermost iteration. There are $O(\log \Delta \log \log \log n)$ outermost iterations. Thus, the probability that $v$ never receives a message is at most $\exp(-O(\log \Delta \log \log \log n)) \le O\left(\frac{1}{ \Delta^2 (\log \log n)^2} \right)$. Lastly, a running time of $O(\log^2 \Delta \log \log \log n)$ follows by the definition of \textsc{Broadcast}.
\end{proof}
}\fullOnly{\ProofLbroadcast}

\gdef\ProofLdistToActive{
\begin{proof}[Proof of \Cref{L:DistToActive} (\textsc{DistToActive})]
Let $u$ be $v$'s nearest active node. The cases when $\dist(v, u) = 0$ and $\dist(v, u) > 2$ are trivial by definition of \textsc{DistToActive}. Consider when $\dist(v, u) = 2$. Let $w$ be a mutual neighbor of $v$ and $u$. The probability $v$ does not learn its distance as 2 is at most the probability that $w$ does not receive $X$ or $v$ does not receive $Y$. By \Cref{L:broadcast} and a union bound this probability is at most $O\left(\frac{1}{\Delta^2(\log \log n)^2} \right)$. Lastly, the case when $\dist(v, w) = 1$  holds for similar reasons. The running time of \textsc{DistToActive} follows from the running time of \textsc{Broadcast}.
\end{proof}
}\fullOnly{\ProofLdistToActive}

\gdef\ProofLlearnDelays{
\begin{proof}[Proof of \Cref{L:LearnDelays} (\textsc{LearnDelays})]

Let $v \in V$ be a most delayed node in its 2-hop neighborhood. The main idea of this proof is to show that nodes in $\Gamma(v)$ always update their binary search parameters in exactly the same manner as $v$. Furthermore, we show that, while nodes in $\Gamma^{(2)}(v) \setminus \Gamma(v)$ might update their parameters differently, these nodes never interfere with the binary search performed by nodes in $\Gamma(v)$.

By $\Cref{L:DistToActive}$ and a union bound over $v$'s at most $\Delta^2$ 2-hop neighbors and the $O(\log \log n)$ rounds of \textsc{LearnDelays}, we have that every 2-hop neighbor of $v$ correctly learns its distance to an active node with constant probability in every round of \textsc{LearnDelays}. From this point on in the proof we condition on every 2-hop neighbor of $v$ correctly learning its distance to an active node in every iteration of \textsc{LearnDelays}. We say that nodes $v$ and $w$ \textbf{deviate} if at some point $v$ and $w$ have different values for $lo$ or $hi$. For any variable $\alpha$ of \textsc{LearnDelays} we let $\alpha_i(v)$ stand for $v$'s value of $\alpha$ in iteration $i$. For example, $hi_i(v)$ is $v$'s value of $hi$ in iteration $i$ of \textsc{LearnDelays}.
  
We show that three properties hold in every iteration $i$ of \textsc{LearnDelays} by induction over $i$: (1) $v$ is not marked silent; (2) $u \in \Gamma(v)$ does not deviate from $v$; (3) if $w \in \Gamma^{(2)}(v) \setminus \Gamma(v)$ deviates from $v$ in iteration $i$ then $w$ is not active in iteration $j$ for any $j > i$. (1) and (2) will show that nodes in $\Gamma(v)$ update their binary search parameters appropriately and (3) will show that any node which fails to do so will never interfere with the binary search of nodes in $\Gamma(v)$.

Our base case for the $0$th iteration is trivial. (1) holds since $v$ is not initially marked silent. (2) holds since every node initializes $hi$ and $lo$ to the same values and so $u \in \Gamma(v)$ initializes $hi$ and $lo$ to the same values as $v$. (3) holds since no $w$ deviates from $v$ initially.

We now show our inductive step. Assume as an inductive hypothesis that in every iteration $i' < i$ we have that (1), (2) and (3) hold. We now show that (1), (2) and (3) hold in iteration $i$. 
\begin{enumerate}[(1)]
\item We would like to show that $v$ is not marked silent in iteration $i$. Assume for the sake of contradiction that $v$ is marked silent. $v$ being marked silent means that $v$ is not active, meaning $t_v > \lfloor  \frac{lo_i(v) + hi_i(v)}{2} \rfloor$, but that there is a node in $w \in \Gamma^{(2)}(v) \setminus \Gamma(v)$ that is active, meaning $t_w \leq \lfloor \frac{lo_i(w) + hi_i(w)}{2} \rfloor$. By (3) of our inductive hypothesis we know that $w$ has not deviated from $v$, meaning $\frac{lo_i(w) + hi_i(w)}{2}= \frac{lo_i(v) + hi_i(v)}{2}$ and so we then have 

\begin{align*}
t_w  &\leq \lfloor  \frac{lo_i(w) + hi_i(w)}{2} \rfloor\\
&= \lfloor \frac{lo_i(v) + hi_i(v)}{2} \rfloor\\
& < t_v
\end{align*}
Thus, $t_w < t_v$. However, since $t_w \in \Gamma^{(2)}(v)$ this contradicts the fact that $t_v$ is most delayed in its 2-hop neighborhood. Thus, we conclude that, in fact, in iteration $i$ it holds that $v$ is not marked silent.
\item We would like to show that in iteration $i$ it holds that $u \in \Gamma(v)$ does not deviate from $v$, i.e.\ $hi_{i+1}(u)= hi_{i+1}(v)$ and $lo_{i+1}(u) = lo_{i+1}(v)$. We case on whether or not in iteration $i$ we have $t_v \leq \lfloor \frac{lo_i(v) + hi_i(lo_i(v))}{2} \rfloor$. 

\begin{itemize}
\item Suppose that $t_v \leq \lfloor \frac{lo_i(v) + hi_i(v)}{2} \rfloor$. By (1) of our induction hypothesis $v$ has not been marked silent and so in this case $v$ is active. If $v$ is active then for any $u \in \Gamma(v)$ (where potentially $u = v$ since $v \in \Gamma(v)$) we have $\text{dist}_i(u) \leq 1$ and so $hi_{i+1}(u) \gets \lfloor \frac{lo_i(u) + hi_i(u)}{2} \rfloor$ and $lo_{i+1}(u) \gets lo_i (u)$. By (2) of our induction hypothesis we have that $lo_i(u) = lo_i(v)$ and $hi_i(v) = hi_i(u)$ for every $u \in \Gamma(v)$. Thus, for $u \in \Gamma(v)$ we have $lo_{i+1}(u) \gets lo_i (u) = lo_i(v) = lo_{i+1}(v)$ and $hi_{i+1}(u) \gets \lfloor \frac{lo_i(u) + hi_i(u)}{2} \rfloor = \lfloor \frac{lo_i(v) + hi_i(v)}{2} \rfloor = hi_{i+1}(v)$. Thus, in this case $v$ and $u$ do not deviate.

\item Suppose that $t_v > \lfloor \frac{lo_i(v) + hi_i(v)}{2} \rfloor$ in iteration $i$. Combining the fact that $v$ is most delayed in its two-hop neighborhood, with (2) of our inductive hypothesis it follows that $\text{dist}_i(v) \geq 2$.
%and so $lo_{i+1}(v) \gets \lfloor \frac{lo_i(v) + hi_i(v)}{2} \rfloor$ and $hi_{i+1}(v) \gets hi_{i}(v)$.

Now assume for the sake of contradiction that $u$ deviates from $v$ in this iteration. That is, $\text{dist}_i(u) \leq 1$. Since $\text{dist}_i(v) \geq 2$ we know that $u$ must deviate from $v$ because there is an active node in $w \in \Gamma^{(2)}(v) \setminus \Gamma(v)$. However, by (3) of our inductive hypothesis any such node must have not deviated from $v$. Since $v$ is most delayed in its two-hop neighborhood it follows that $t_w \geq t_v \lfloor \frac{lo_i(v) + hi_i(v)}{2} \rfloor = \lfloor \frac{lo_i(w) + hi_i(w)}{2} \rfloor$ which contradicts the fact that $w$ is active. Thus, in this case $u$ must not deviate from $v$.
\end{itemize}

\item We would like to show that if $w \in \Gamma^{(2)}(v) \setminus \Gamma(v)$ deviates from $v$ in iteration $i$ then $w$ is not active in iteration $j$ for any $j > i$. We demonstrate this by showing that any such $w$ is either marked silent in iteration $i$ or in every subsequent iteration $t_w$ is outside of the range over which $w$ is searching. Consider a $w$ that deviates from $v$. $w$ deviates if $\text{dist}_i(v) \leq 1$ but $\text{dist}_i(w) \geq 2$ or $\text{dist}_i(w) \leq 1$ but $\text{dist}_i(v) \geq 2$. Consider each case.
\begin{itemize}
\item Suppose $\text{dist}_i(v) \leq 1$ but $\text{dist}_i(w) \geq 2$. First notice that in this case $v$ is active since $v$ is minimal in its two-hop neighborhood and by (1) of our inductive hypothesis $v$ is not silent. Thus, since $v$ is active and $v$ is at most two hops from $w$ we have $\text{dist}_i(w) \leq 2$, meaning that $\text{dist}_i(w) = 2$. However, this means that $w$ is marked as silent and will therefore never be active in any iteration $j > i$.
\item Suppose $\text{dist}_i(w) \leq 1$ but $\text{dist}_i(v) \geq 2$. By (3) of our inductive hypothesis we have that $lo_i(v) = lo_i(w)$ and $hi_i(v) = hi_i(w)$. Moreover, since $\text{dist}_i(v) \geq 2$ we know that in iteration $i$ it holds that $t_v > \lfloor \frac{lo_i(v) + hi_i(v)}{2} \rfloor$. Since $v$ is minimal in its two-hop neighborhood we have that $t_v \leq t_w$. Combining these facts we have that 

\begin{align*}
t_w &\geq t_v\\
&> \left\lfloor \frac{lo_i(v) + hi_i(v)}{2} \right\rfloor\\
&= \left\lfloor \frac{lo_i(w) + hi_i(w)}{2} \right\rfloor\\
\end{align*}

Thus, $t_w > \left\lfloor \frac{lo_i(w) + hi_i(w)}{2} \right\rfloor$. Moreover, since $\text{dist}_i(w) \leq 2$ we know that $hi_{i+1}(w) = \left\lfloor \frac{lo_i(w) + hi_i(w)}{2} \right\rfloor$ and so $t_w > hi_{i+1}(w)$. Since $hi$ is non-increasing over iterations---i.e.\ $hi_{j+1}(w) \leq hi_{j}(w)$ for $j$ and any $w$---we will have that in every iteration $j > i$ it holds that $t_w > hi_{i+1}(w) \geq hi_j(w) \geq \lfloor \frac{lo_j(w) + hi_j(w)}{2} \rfloor$ meaning that $w$ cannot be active in iteration $j$ since $w$ is only active in iteration $j$ when $t_w \leq \lfloor \frac{lo_j(w) + hi_j(w)}{2} \rfloor$.

\end{itemize}

\end{enumerate}

Having shown our induction we now argue that the value that $v$ will learn from its binary search is $t_v$, it's virtual round. In particular, we argue that $v$ always updates its binary search parameters according to $t_v$. That is, in iteration $i$ we have $v$ reduces $hi$ when $t_v \leq \lfloor \frac{lo_i(v) + hi_i(v)}{2} \rfloor$ and $v$ increases $lo$ when $t_v > \lfloor \frac{lo_i(v) + hi_i(v)}{2} \rfloor$.
\begin{itemize}
\item If $v$ updates $hi$ in iteration $i$ then there must be some node $u \in \Gamma(v)$ which is active. That is, $t_u \leq \lfloor \frac{lo_i(u) + hi_i(u)}{2} \rfloor$. By (2) we have $lo_i(u) = lo_i(v)$ and $hi_i(u) = hi_i(v)$ and by the fact that $v$ is most delayed in its two-hop neighborhood we know that $t_v \leq t_u$. Thus $t_v \leq t_u \leq  \lfloor \frac{lo_i(u) + hi_i(u)}{2} \rfloor = \lfloor \frac{lo_i(v) + hi_i(v)}{2} \rfloor$.
\item If $v$ updates $lo$ then no node in $\Gamma(v)$ is active and so certainly $v$ itself is not active. But since $v$ is never marked silent by (1) it must not be active because $t_v > \lfloor \frac{lo_i(v) + hi_i(v)}{2} \rfloor$.
\end{itemize}
In either case we have that $v$ always updates its binary search parameters according to $t_v$ and so we have that $v$ will return $t_v$ from its binary search.

Thus, $v$ learns as its binary search value $t_v$ but since no node in $\Gamma(v)$ deviates from $v$ so too does every node in $\Gamma(v)$. Recalling that we conditioned on the fact that every node in $\Gamma^{(2)}(v)$ always learns its distance to an active neighbor correctly which happens with constant probability, we conclude our claim. Lastly, our running time comes from straightforwardly summing the running time of subroutines and noting that by assumption every value we are binary searching over lies within a $O(\log n)$ range.
\end{proof}
} \fullOnly{\ProofLlearnDelays}

\gdef\ProofLmainRoutine{
\begin{proof}[Proof of \Cref{L:main-routine} (\textsc{MainNonAdaptive})]
Fix a $v$ that is most delayed in its 2-hop neighborhood such that $t_v < L$. Note that by definition of a virtual round and the fact that $t_v < L$ we have that $P(v, t_v)$ is a listening action where in round $t_v$ of $P$ it holds that $v$ is sent a message without collision. By \Cref{L:LearnDelays} after \textsc{LearnDelays} terminates with constant probability each 1-hop neighbor $w$ will have $m_w = t_v$. Thus, every node in the 1-hop neighborhood of $v$ will simulate round $t_v$, i.e. the one neighbor that has a message for $v$ will broadcast its message and the rest of $v$'s neighbors will be silent. This message will be successfully received with probability $p$ which is constant by assumption and so $t_v$ will be incremented with constant probability. The running time follows by summing subroutines.
\end{proof}
}\fullOnly{\ProofLmainRoutine}

%%% for full only: \staticSimulationTheorem*
\gdef\ProofStaticSimulationTheorem{
\begin{proof}[Proof of \Cref{thm:local-static-simulation}]
  Let $Q := O(\log n)$. We will prove the following subclaim: Assume that for every vertex $v$ we have $L - Q \le t_v \le L$; letting $t_v'$ be $v$'s virtual round after $Q$ outermost iterations of \textsc{MainNonAdaptive} we have $L \le t'_v \le L + Q$ w.h.p. 
 
Note that this subclaim immediately proves the theorem. The length of $P$ is $\poly(n)$ by assumption and so we can divide $P$ into $\poly(n)$ chunks of rounds each of size $Q$. By induction and a union bound over chunks  we have that after $KQ$ outermost iterations of \textsc{MainNonAdaptive} it holds that for every node $v$ we have $QK \leq t_v \leq QK + 1$ w.h.p. Letting $K = \frac{|P|}{Q}$ yields that with high probability after $T + Q$ rounds every node $v$ is such that $|P| \leq t_v$, yielding our theorem.

We now prove the subclaim. Note that the assumption of our subclaim is that the preconditions of \Cref{L:main-routine} are satisfied. For simplicity and without loss of generality we assume $L = 0$. Let $x \in \mb{Z}_{\ge 1}$ and define $D_{v, x}$ as the random variable giving the index of the earliest innermost iteration of \textsc{MainNonAdaptive} when $t_v \ge x$. Using \Cref{L:main-routine} we get the following following recurrence relation. $D_{v, x} \leq \min_{w \in \Gamma^{(2)}(v)} D_{w, x-1} + \mc G_{v, x}(q)$, where $q$ is the constant probability of an inner iteration of \textsc{MainNonAdaptive} advancing a node most delayed in its 2-hop neighborhood given by \Cref{L:main-routine}.

We now prove a tail bound for $D_{v, x}$ by union bounding over all ``blaming chains.'' A blaming chain for $D_{v,x}$ is a sequence of nodes paired with simulated rounds that could explain why $D_{v,x}$ took as long as it did. In particular, a blaming chain $C$ for $D_{v,x}$ is a sequence of (node, round) tuples $(v_x, x), (v_{x-1}, x-1), \ldots, (v_1,1 )$ where $v_{i} \in \Gamma^{(2)}(v_{i-1})$ and $v_x = v$. We let $\mathcal{C}(D_{v,x})$ stand for such blaming chains. We say $|C| := \sum_{(v_i, i) \in C} \mathcal{G}_{v_i, i}$ is the length of blaming chain $C$.  Note that the size of $\mathcal{C}(D_{v,x})$ is at most $(\Delta^2)^{x-1}$.
 Also notice that $D_{v, x}$ just is the length of the longest blaming chain in $\mathcal{C}(D_{v, x})$. Thus, we have that
\begin{align*}
D_{v, x} &= \max_{w \in \Gamma^{(2)}(v)} D_{w, x-1} + \mathcal{G}_{v, x}\\
& = \max_{C \in \mathcal{C}(D_{v,x})} |C|\\
\end{align*}
\noindent As such, to show an upper bound on $D_{v, x}$ it suffices to show that no blaming chain is too long. Notice that the length of a blaming chain in $\mathcal{C}(D_{v,x})$ just is the sum of $x$ geometric random variables with constant expectation. \Cref{lem:chernoff-sum-of-max} of \Cref{sec:tail-bounds} gives a Chernoff-style tail bound for such sums and asserts that for any fixed blaming chain $C$ in $\mathcal{C}(D_{v, Q})$ we have $\Pr[|C| \ge c(Q \log \Delta + k)] \le \exp(-k)$ for some constant $c > 0$. A union bound over all blaming chains of all nodes gives us that $\Pr[\exists v \in V,\, D_{v, Q} \ge c(Q \log \Delta + k)] \le n (\Delta^2)^{Q-1} \exp(-k)$. Setting $k \gets 2Q \ln \Delta + O(\ln n)$, all $D_{v, Q} = O(Q \log \Delta + \log n) = O(\log n \log \Delta)$ w.h.p.  Note that $D_{v, Q}$ counts innermost iterations, hence it takes $\frac{D_{v, Q}}{c \log \Delta}$ outermost iterations for the condition to be satisfied. Moreover, $c$ is the constant in the iteration range of the innermost for loop of \textsc{MainNonAdaptive} and $Q$ does not depend on it. Hence we can set $c > 0$ sufficiently large such that $\frac{D_{v, Q}}{c \log \Delta} = \frac{O(\log n)}{c} \le Q$. Having shown that the number of outer iterations for every variable to arrive at virtual round $Q$ is at most $Q$, we conclude the subclaim. As we earlier argued, this gives our theorem. Note that our running time follows from summing the runtimes of our subroutines.
  
\end{proof}
}\fullOnly{\ProofStaticSimulationTheorem} % end proof gdef

\subsection{General Protocol Simulation} \label{sec:general-simulator}
Here, we provide our results for arbitrary protocols.  Our approach gives a $O(\Delta \log ^2 \Delta)$ multiplicative overhead. %While the result still shows that bounded degree networks furnish a constant-overhead simulation, we leave the existence of a simulation with multiplicative $\poly(\log \Delta, \log \log n)$ overhead as an open problem.

\generalSimulationTheorem*

Again, we build on the notion of a virtual round for this setting. Our main challenge in this setting is that even if a node knows its neighbors are simulating its virtual round, the node cannot tell if the absence of a message indicates that it receives no message in this round in the original protocol or that a random fault occurred. As such, in order for node $v$ to advance its virtual round after hearing no messages from a neighbor, $v$ must confirm that every neighbor was silent in the simulated round. Let $P$ be the original protocol for the faultless setting. Define the \textbf{token} for a node $v$ in round $r$ of $P$ to be either the message that $v$ is sending in round $r$ of $P$ or an arbitrary message indicating ``$v$ is not broadcasting'' if $v$ is silent in round $r$ of $P$. Next, we (re-)define the \textbf{virtual round} of a node $v$ to be the largest $t_v \in \mathbb{Z}_{\ge 1}$ such that $v$ successfully received \emph{all tokens from all neighbors} for rounds $1, 2, \ldots, t_v - 1$ (for a total of up to $(t_v-1)\Delta$ tokens).

Our simulation algorithm works in two phases: every node first informs its neighbors of its virtual round; next, nodes help the neighbor with the smallest $t_v$ they saw by sharing the token for that round. We now present pseudocode for the \textsc{ShareKnowledge} routine which shares messages from a node with all of its neighbors and \textsc{MainGeneral} which simulates $P$ in the noisy setting.

\begin{algorithm}[H]
\caption{\textsc{ShareKnowledge} for node $v$}
\begin{algorithmic}
  \Require{a message, $B_v$, that $v$ wants to share}
  \For{$O(\Delta \log \Delta)$ rounds}
  \State $v$ broadcasts $B_v$ with probability $\frac{1}{\Delta}$, independently from other nodes
  \EndFor
\end{algorithmic}
\end{algorithm}

\begin{lemma}\label{lem:shareKnowledge}
  After \textsc{ShareKnowledge} terminates, a fixed node $v$ successfully receives messages from all its neighbors with probability at least $3/4$ and successfully sends its message to all neighbors with probability at least $3/4$. The running time of \textsc{ShareKnowledge} is $O(\Delta \log \Delta)$ rounds.
\end{lemma}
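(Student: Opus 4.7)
The plan is to analyze a single round of \textsc{ShareKnowledge} from two viewpoints (sender and receiver), bound the per-round success probability for a fixed neighbor by $\Omega(1/\Delta)$, then amplify over $O(\Delta \log \Delta)$ independent rounds and finish with a union bound over the at most $\Delta$ neighbors.

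First I would analyze the ``$v$ sends successfully'' event for a fixed neighbor $u$. In a single round, the message from $v$ arrives at $u$ iff (i) $v$ broadcasts, (ii) $u$ listens (i.e. does not transmit), (iii) every other neighbor of $u$ is silent, and (iv) no receiver fault occurs at $u$. Since each node independently broadcasts with probability $1/\Delta$, event (i) has probability $1/\Delta$, event (ii) has probability $1-1/\Delta$, and event (iii) has probability at least $(1-1/\Delta)^{\Delta-1} \ge 1/e$ since $u$ has at most $\Delta$ neighbors besides $v$. Event (iv) has probability $1-p = \Omega(1)$ and is independent of the broadcast decisions. Multiplying, the per-round success probability for a fixed neighbor is at least $c/\Delta$ for some constant $c>0$. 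Over $N = O(\Delta \log \Delta)$ independent rounds, the probability that $v$ \emph{never} delivers to $u$ is at most $(1-c/\Delta)^N \le \exp(-c N/\Delta) = \Delta^{-\Omega(1)}$; by choosing the hidden constant in $N$ sufficiently large, this is at most $1/(4\Delta)$. A union bound over $v$'s at most $\Delta$ neighbors then gives the desired ``sends to all'' probability of at least $3/4$.

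Next I would handle the receiver side symmetrically: for a fixed neighbor $u$ of $v$, in a single round $v$ receives $u$'s token iff $u$ broadcasts, $v$ listens, every other neighbor of $v$ is silent, and no fault occurs at $v$. The same computation yields a per-round success probability of $\Omega(1/\Delta)$ for this fixed $u$, after which the $O(\Delta \log \Delta)$ independent repetitions plus a union bound over $v$'s neighbors give the ``receives from all'' probability of at least $3/4$. The running time claim is immediate from the definition of the loop.

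The only mildly delicate point, which is where I would be careful with constants, is ensuring that the same constant hidden in the $O(\Delta \log \Delta)$ loop count works for both the sender bound and the receiver bound, and that degenerate cases ($\Delta=1$, or $v$ having fewer than $\Delta$ neighbors) are handled by the monotonicity of the bounds $(1-1/\Delta)^{d-1} \ge (1-1/\Delta)^{\Delta-1} \ge 1/e$. Beyond that, the argument is a standard ``broadcast with probability $1/\Delta$'' decay-style analysis, and no coupling between the two events is needed since we only want marginal probabilities at least $3/4$.
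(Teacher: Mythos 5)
Your proposal is correct and follows essentially the same argument as the paper: an $\Omega(1/\Delta)$ per-round success probability for a fixed neighbor, amplification over the $O(\Delta\log\Delta)$ independent rounds to get failure probability at most $1/(4\Delta)$, and a union bound over the at most $\Delta$ neighbors. If anything, you are slightly more careful than the paper's proof, which only explicitly analyzes the receiving direction and silently absorbs the ``$v$ is listening'' and ``no receiver fault'' factors (both $\Omega(1)$) into the constant.
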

\begin{proof}
Fix an arbitrary node $v$. Consider the event where $v$ receives a message from a fixed neighbor $w$ in a fixed iteration of \textsc{ShareKnowledge}. This event occurs iff $w$ broadcasts and all other neighbors of $v$, namely $\Gamma(v) \setminus \{w, v\}$, do not broadcast. This occurs with probability that is at least $\frac{1}{\Delta}(1 - \frac{1}{\Delta})^{|\Gamma(v) \setminus \{w, v\}|} \geq \frac{1}{\Delta}(1 - \frac{1}{\Delta})^{\Delta} \geq \Omega(\frac{1}{\Delta})$ by $(1-\frac{1}{x})^x = \Omega(1)$. The probability that $v$ does not hear from $w$ after $O(\Delta \log \Delta)$ iterations is $(1 - \Omega(\frac{1}{\Delta}))^{\Delta \log \Delta} \le \exp(-\Omega(\log \Delta)) \le \frac{1}{4 \Delta}$. Union bounding over all $|\Gamma(v)| \le \Delta$ possibilities for $w$ we get that the probability of $v$ not sharing knowledge with all neighbors is at most $1/4$.
\end{proof}

\begin{algorithm}[H]
\caption{\textsc{MainGeneral} for node $v$}
\begin{algorithmic}
  \State $t_v \gets 1$
  \For{$O(T \log \Delta + \log n)$}
  \State $v$ runs \textsc{ShareKnowledge}($t_v$)
  \State $m_v \gets$ smallest value $v$ receives from all nodes running \textsc{ShareKnowledge}
  \State \textsc{ShareKnowledge}(token for virtual round $m_v$)
  \State update $t_v$ if $v$ received all tokens for round $t_v$
  \EndFor
\end{algorithmic}
\end{algorithm}

\begin{lemma}
  \label{L:MainGeneral}
  Let $v$ be a most delayed node in its 2-hop neighborhood. After one \textsc{MainGeneral} loop iteration, $t_v$ will increase by one with at least constant probability.
\end{lemma}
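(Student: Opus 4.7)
The plan is to exploit the fact that $v$ being most delayed in its 2-hop neighborhood forces every 1-hop neighbor $w$ of $v$ to compute $m_w = t_v$ after step~1, so long as $v$'s virtual round is successfully disseminated. Once that happens, step~2 just needs to deliver all of the tokens for round $t_v$ back to $v$, and both events are quantified by \Cref{lem:shareKnowledge}.

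More concretely, I would first note the consequence of the hypothesis: every $u \in \Gamma^{(2)}(v)$ satisfies $t_u \ge t_v$. In particular, each 1-hop neighbor $w$ of $v$, together with all of $w$'s own neighbors (which lie in $\Gamma^{(2)}(v)$), have virtual round at least $t_v$. Next I would invoke \Cref{lem:shareKnowledge} for the first \textsc{ShareKnowledge} call: with probability at least $3/4$, $v$ successfully transmits $t_v$ to every $w \in \Gamma(v)$. Conditioned on this event, each such $w$ learns $t_v$ from $v$ while every other virtual round it could possibly hear is $\geq t_v$, so $m_w = t_v$.

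For step~2 I would observe that because $t_w \ge t_v$, node $w$ has already received every token for rounds $1, 2, \ldots, t_v - 1$ from all of its neighbors, and therefore can reconstruct its own behavior in round $t_v$ of $P$ and generate the correct token. Applying \Cref{lem:shareKnowledge} a second time (for the second \textsc{ShareKnowledge} call), with probability at least $3/4$ node $v$ successfully receives all messages from all of its neighbors, which are precisely the round-$t_v$ tokens. A union bound over the two failure events gives success probability at least $1/2$, and in the success case $v$ has received every neighbor's token for round $t_v$, so by the definition of virtual round $t_v$ increments by at least one.

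The only subtle point will be justifying $m_w = t_v$ uniformly across $w \in \Gamma(v)$: one needs both the lower bound $m_w \ge t_v$, which follows from $\Gamma(w) \subseteq \Gamma^{(2)}(v)$ and the ``most delayed'' hypothesis, and the upper bound $m_w \le t_v$, which follows because $w$ actually hears $t_v$ from $v$ in step~1; the rest is a clean union bound over the two \textsc{ShareKnowledge} failure events. I expect this to be the only step requiring care, since the rest is a direct application of \Cref{lem:shareKnowledge} twice.
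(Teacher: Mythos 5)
Your proof is correct and follows essentially the same route as the paper's: the ``most delayed in $\Gamma^{(2)}(v)$'' hypothesis forces $m_w = t_v$ for every $w \in \Gamma(v)$, and two applications of \Cref{lem:shareKnowledge} (one for $v$ sending $t_v$ out, one for $v$ receiving all round-$t_v$ tokens) with a union bound give success probability at least $1/2$. Your extra observation that each neighbor $w$ can actually reconstruct its round-$t_v$ token because $t_w \ge t_v$ is a detail the paper leaves implicit, but it does not change the argument.
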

\begin{proof}
Fix a node $v$ that is most delayed in its 2-hop neighborhood. By \Cref{lem:shareKnowledge} and a union bound over $v$'s at most $\Delta$ neighbors, the probability that the first call to \textsc{ShareKnowledge} successfully shares $v$'s message to all of its neighbors, and that all of $v$'s neighbors successfully share their tokens with $v$ is at least $\frac{1}{2}$. Since $v$ is minimal in its 2-hop neighborhood it follows that $m_w = t_v$ for $w \in \Gamma(v)$. Thus, $v$ will receive a token from every neighbor for its virtual round with probability at least $\frac{1}{2}$, meaning that $v$ increments its virtual round by one with constant probability.
\end{proof}

\begin{proof}[Proof of \Cref{thm:local-general-simulation}]
  We use the same blaming chain proof technique as in \Cref{thm:local-static-simulation} but use \Cref{L:MainGeneral} instead of \Cref{L:main-routine}. The proof is completely analogous as in \Cref{thm:local-static-simulation} but we include the proof here for completeness.
  
  Let $x \in \mb{Z}_{\ge 1}$ and define $D_{v, x}$ as the random variable giving the index of the earliest iteration of \textsc{MainGeneral} when $t_v \ge x$. Using \Cref{L:MainGeneral} we get the following following recurrence relation. $D_{v, x} \leq \min_{w \in \Gamma^{(2)}(v)} D_{w, x-1} + \mc G_{v, x}(q)$, where $q$ is the constant probability of an inner iteration of \textsc{MainNonAdaptive} advancing a node most delayed in its 2-hop neighborhood given by \Cref{L:MainGeneral}.

  We now prove a tail bound for $D_{v, x}$ by union bounding over all ``blaming chains.'' A blaming chain for $D_{v,x}$ is a sequence of nodes paired with simulated rounds that could explain why $D_{v,x}$ took as long as it did. In particular, a blaming chain $C$ for $D_{v,x}$ is a sequence of (node, round) tuples $(v_x, x), (v_{x-1}, x-1), \ldots, (v_1,1 )$ where $v_{i} \in \Gamma^{(2)}(v_{i-1})$ and $v_x = v$. We let $\mathcal{C}(D_{v,x})$ stand for such blaming chains. We say $|C| := \sum_{(v_i, i) \in C} \mathcal{G}_{v_i, i}$ is the length of blaming chain $C$.  Note that the size of $\mathcal{C}(D_{v,x})$ is at most $(\Delta^2)^{x-1}$.
  Also notice that $D_{v, x}$ just is the length of the longest blaming chain in $\mathcal{C}(D_{v, x})$. Thus, we have that
  \begin{align*}
  D_{v, x} &= \max_{w \in \Gamma^{(2)}(v)} D_{w, x-1} + \mathcal{G}_{v, x}\\
  & = \max_{C \in \mathcal{C}(D_{v,x})} |C|\\
  \end{align*}
  \noindent As such, to show an upper bound on $D_{v, x}$ it suffices to show that no blaming chain is too long. Notice that the length of a blaming chain in $\mathcal{C}(D_{v,x})$ just is the sum of $x$ geometric random variables with constant expectation. \Cref{lem:chernoff-sum-of-max} of \Cref{sec:tail-bounds} gives a Chernoff-style tail bound for such sums and asserts that for any fixed blaming chain $C$ in $\mathcal{C}(D_{v, T})$ we have $\Pr[|C| \ge c(T \log \Delta + k)] \le \exp(-k)$ for some constant $c > 0$. A union bound over all blaming chains of all nodes gives us that $\Pr[\exists v \in V,\, D_{v, T} \ge c(T \log \Delta + k)] \le n (\Delta^2)^{T-1} \exp(-k)$. Setting $k \gets 2T \ln \Delta + O(\ln n)$, all $D_{v, T} = O(T \log \Delta + \log n)$ w.h.p. Thus every node will achieve virtual round $T$ within $O(T \log \Delta + \log n)$ iterations of \textsc{MainGeneral} w.h.p.\ and so therefore within $O((T \log \Delta + \log n) \Delta \log \Delta)$ rounds.
  %Note that $D_{v, T}$ counts iterations an hence it takes $\frac{D_{v, T}}{c \log \Delta}$ outermost iterations for the condition to be satisfied. Moreover, $c$ is the constant in the iteration range of the innermost for loop of \textsc{MainNonAdaptive} and $Q$ does not depend on it. Hence we can set $c > 0$ sufficiently large such that $\frac{D_{v, Q}}{c \log \Delta} = \frac{O(\log n)}{c} \le Q$. Having shown that the number of outer iterations for every variable to arrive at virtual round $Q$ is at most $Q$, we conclude the subclaim. As we earlier argued, this gives our theorem. Note that our running time follows from summing the runtimes of our subroutines.
\end{proof}

\section{Lower bounds}\label{sec:lower-bounds}

In this section we argue that an $\Omega(\poly \log \Delta)$ multiplicative overhead in simulation is necessary in two natural settings. In the first setting the simulation is not permitted to use network coding \citep{Ahlswede2006}. In the second setting the simulation must respect information flow in the sense that we show a lower bound when the graph is directed. It follows that any simulation with constant overhead either uses network coding or does not respect information flow. We also give a construction which we believe could be used to show an $\Omega(\poly \log \Delta)$ multiplicative overhead in simulation, even without any assumptions.

Additionally, we strengthen our lower bounds by proving them in the setting in which the simulation is granted ``global control''. Informally, global control eliminates the need for control messages by providing a centralized scheduler that can synchronize nodes based on how faults occur. The scheduler, however, cannot read the actual contents of the messages.

\begin{definition}[Global Control]
We say that a noisy radio network has global control when (1) nodes know the network topology, (2) nodes learn which nodes broadcast in each round and at which nodes receiver faults occur in each round, and (3) all nodes have access to public randomness.
\end{definition}

\noindent It is not difficult to see that access to global control is sufficient to achieve the local progress detection of \Cref{sec:warmup-progress-detection}. Moreover, notice that our simulation from \Cref{sec:warmup-progress-detection} with $O(\log \Delta)$ multiplicative overhead is both non-coding and respects information flow. As such, it is optimal for both settings.

%Unfortunately, proving an \emph{unconditional} super-constant lower bound was a surprisingly difficult task which we leave as an open problem. However, at the end of the section we conjecture a protocol which we believe could incur a $\Omega(\log \Delta)$ multiplicative overhead for any noisy simulation.

\subsection{Non-Coding Simulations}
We now define a non-coding protocol and prove that simulations that do not use network coding suffer a $\Omega(\log \Delta)$ multiplicative overhead.

\begin{definition}[Non-Coding]
  We say that a simulation $P'$ in the noisy setting, which is simulating a protocol $P$ in the faultless setting,  is non-coding if any message sent in $P'$ is also sent in $P$ (though possibly by different nodes).
\end{definition}

We consider an isolated star with degree $\Delta$ where the center node wants to send $T$ messages to its neighbors. One can achieve a constant multiplicative overhead on this protocol by using an error correction code like Reed-Solomon~\cite{WickerRSCodes}. However, the following lemma shows that if coding is not used no such overhead is possible.
\begin{lemma}
  \label{L:non-coding}
  For any $T \ge 1$ and sufficiently large $\Delta$ there exists a faultless protocol of length $T$ on the star network with $\Delta + 1$ nodes such that any non-coding simulation of the protocol in the noisy setting with constant success probability requires $\Omega(T \log \Delta)$ many rounds even if the simulation has access to global control.
\end{lemma}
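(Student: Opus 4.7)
The plan is to take as the hard protocol $P$ the one in which the center $c$ of the $(\Delta{+}1)$-node star broadcasts $T$ distinct messages $m_1,\ldots,m_T$ in rounds $1,\ldots,T$. In the faultless execution every leaf receives every $m_j$, so any successful simulation must deliver every $m_j$ to every leaf. Because $P'$ is non-coding, each of its transmissions carries one of the $m_j$'s, and because the star has no leaf-to-leaf edges, a broadcast originating at a leaf only reaches $c$, who already knows every message as the source. So we may assume without loss of generality that every broadcast in $P'$ is performed by $c$; if $B_j$ denotes the (random) number of times $c$ broadcasts $m_j$, then the length $L$ of $P'$ satisfies $L \geq \sum_{j=1}^T B_j$.

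The main step is to produce a lower bound on $\sum_j B_j$ that survives the adaptivity granted by global control. I do this by coupling the noise to a scheduler-independent object: for each triple $(j,\ell,i)$, let $F_{j,\ell,i}$ be an independent Bernoulli$(p)$ random variable, interpreted as the fault that leaf $\ell$ experiences on the $i$-th broadcast of $m_j$. This is a valid coupling because the true fault process is i.i.d.\ Bernoulli$(p)$ per (round, listener), and any re-indexing induced by the scheduler (augmented with fresh i.i.d.\ variables for broadcasts that never happen) preserves the marginal distribution. Define
\[
  N_{j,\ell} := \min\{i \geq 1 : F_{j,\ell,i} = 0\},
\]
a geometric$(1-p)$ random variable, independent across $(j,\ell)$. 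Leaf $\ell$ cannot receive $m_j$ unless $c$ broadcasts $m_j$ at least $N_{j,\ell}$ times, so on the event that $P'$ succeeds we have $B_j \geq \max_{\ell} N_{j,\ell}$ for every $j$. Combining with $L \geq \sum_j B_j$, this yields the pointwise implication $\{P' \text{ succeeds}\} \subseteq \{L \geq X\}$, where $X := \sum_{j=1}^T \max_{\ell} N_{j,\ell}$. Crucially, $X$'s distribution is completely independent of the scheduler.

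Finally, I apply a standard tail estimate to $X$. From $\Pr[\max_{\ell} N_{j,\ell} \leq k] = (1-p^k)^\Delta$, choosing $k = c'\log_{1/p}\Delta$ with small $c' < 1$ and $\Delta$ sufficiently large gives $\max_{\ell} N_{j,\ell} = \Omega(\log \Delta)$ with probability at least $1 - \eta$ for any desired constant $\eta > 0$. Since the $T$ summands of $X$ are i.i.d.\ with exponentially decaying tails, standard Chernoff-style concentration for sub-exponential sums gives $X \geq cT\log\Delta$ with probability arbitrarily close to $1$ for a suitable constant $c>0$ (for the $T=1$ case the single-summand bound suffices). Combining the success probability $\delta$ with this high-probability bound via inclusion-exclusion makes the event $\{L \geq cT\log\Delta\}$ occur with strictly positive probability; since $L$ is deterministic, this forces $L = \Omega(T\log\Delta)$.

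The main technical obstacle is the adaptivity of the scheduler: under global control the quantity $B_j$ is a stopping-time-like object whose dependence on the noise is complicated. The coupling circumvents this by predefining the fault process per (message, broadcast-index, listener) so that the lower bound $X$ is a fixed function of the $F$-variables; no adaptive strategy can avoid a fault it has not yet forced the environment to reveal, so the scheduler cannot beat the geometric/max concentration bounds regardless of how cleverly it schedules broadcasts.
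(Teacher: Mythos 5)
Your proof is correct, and it takes a genuinely different route from the paper's. The paper argues by contradiction plus averaging: if the simulation used $o(T\log\Delta)$ rounds, then some message $M_{i^*}$ would be broadcast only $O(\log\Delta)$ times (with a small constant), and then the probability that at least one of the $\Delta$ leaves never receives $M_{i^*}$ is $1-(1-\Delta^{-1/100})^{\Delta}\to 1$, contradicting constant success probability. You instead lower-bound the simulation length directly: by coupling the noise to variables $F_{j,\ell,i}$ indexed by (message, broadcast index, leaf), you get that on the success event the length is at least $X=\sum_{j}\max_{\ell}N_{j,\ell}$, a scheduler-independent sum of maxima of geometric variables, and you then concentrate $X$ at $\Omega(T\log\Delta)$. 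What your route buys is an explicit treatment of adaptivity under global control: in the paper's proof the counts $t_i$ and the index $i^*$ are treated as fixed, which is only immediate for oblivious schedules, whereas your lazy-coupling makes the pointwise bound $B_j\ge\max_\ell N_{j,\ell}$ valid for adaptive schedules; the price is that you need a lower-tail concentration step for sums of maxima of geometrics, which the paper's single-undersent-message argument avoids. One small point to tighten: the concentration should be stated uniformly in $T$ (e.g., by Markov's inequality applied to the number of indices $j$ with $\max_\ell N_{j,\ell}< c'\log\Delta$, whose expectation is $\eta T$ with $\eta\to 0$ as $\Delta\to\infty$), since a bare ``Chernoff for sub-exponential sums'' failure bound of the form $\exp(-\Omega(T))$ is not small for constant $T$. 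Finally, note that both your proof and the paper's rely on the same implicit step that, for a non-coding simulation, a leaf can reconstruct $M_j$ only by literally receiving it from the center.
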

\begin{proof}
  As noted, the network is a star with $\Delta$ leafs. Let $r$ be the central node of the star. In our faultless protocol, $r$ receives $T$ private inputs $M_1, M_2, \ldots, M_T$ each of $\Theta(\log n)$ bits. $r$ takes $T$ rounds to broadcast each input, broadcasting $M_i$ at round $i$.

Now consider a simulation of our protocol and assume for the sake of contradiction that it succeeds with constant probability. Let $C = C(p)$ be a constant such that $1 - p \ge \exp(- C)$ where $p = \Omega(1)$ is the fault probability of our noisy network. By the non-coding assumption, all messages sent by $P'$ must be in the set $\{M_1, \ldots, M_T\}$. Denote by $t_i$ the number of times $r$ broadcasts $M_i$. For the sake of contradiction, assume that $\sum_{i=1}^T t_i \le \frac{1}{100 C} T \log \Delta$. Hence $\min_{i \in [T]} t_i \le \frac{1}{100 C} \log \Delta$ by an averaging argument. Let $i^* = \arg\min_{i \in [T]} t_i$. Notice that the probability that a fixed node receives a message is independent of that of any other node since WLOG only $r$ ever broadcasts. Therefore, the probability that a fixed node does not receive $M_{i^*}$ is $(1-p)^{t_{i^*}} \ge (1-p)^{\frac{1}{100 C} \log \Delta} \ge \exp(- C \frac{1}{100 C} \log \Delta) \ge \Delta^{- 1/100}$ by definition of $C$. Consequently, the probability that some node does not receive $i^*$ is at least $1 - (1 - \Delta^{-1/100})^\Delta \ge 1 - \exp(\Delta^{99/100}) $ which tends to $1$ as $\Delta \rightarrow \infty$. This contradicts our assumption that our simulation succeeds with constant probability. Therefore, no simulation protocol of length $\Omega(T \log \Delta)$ can deliver all messages with constant probability.
\end{proof}

\subsection{Simulations that Respect Information Flow} 
In this section we show how any simulation with less than a $\Omega(\poly(\log \Delta ))$ multiplicative overhead must route information along different paths than those in the faultless setting. In particular, we show that a $\Omega(\poly(\log \Delta ))$ lower bound holds in a directed network where information in any simulation flows just as it does in the noiseless setting.

% We can argue about this in a simple way by defining a \textbf{directed} network which could have edges that only pass information in one way. It is simple to show that there exists a protocol on such a network with a $\Omega(\log \Delta)$ multiplicative overhead.

\begin{restatable}{lemma}{Ldirected}
  \label{L:directed}
  Let $G = (L \cup R, E)$ be a complete directed bipartite graph with $|L| = |R| = \Delta$ that has an arc $(l, r)$ for all $l \in L, r \in R$. There exists a protocol $P$ of length $\Delta$ for the faultless setting on the directed network $G$ such that any protocol that works in the noisy setting with constant success probability requires $\Omega(\Delta \log \Delta)$ rounds. This bound holds even in the global control setting.
\end{restatable}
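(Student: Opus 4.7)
The plan is to exhibit a specific hard protocol $P$ on $G$ and then lower-bound the rounds any simulation needs via an information-theoretic counting argument tailored to the directed bipartite structure. For $P$, give each $l_i \in L$ a uniformly random private input $M_i \in \{0,1\}^{\Theta(\log n)}$, and in round $i$ have $l_i$ broadcast $M_i$ while every $r \in R$ listens. In the faultless setting every $r$ receives $M_1, \ldots, M_\Delta$ in $|P| = \Delta$ rounds, so any simulation $P'$ of length $T'$ must let every $r \in R$ reconstruct all $\Delta$ messages with constant probability.

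The key structural step exploits directedness: since $l_i$ has no incoming arcs, no node other than $l_i$ ever obtains any information about $M_i$ during $P'$. Other left nodes never receive any message at all; the right nodes have no outgoing arcs with which to propagate what they receive; and the global-control scheduler coordinates broadcast timing purely from public information (topology, fault pattern, public randomness) and is explicitly forbidden from reading message contents. In particular, every node's broadcast schedule is statistically independent of $(M_1, \ldots, M_\Delta)$. Consequently the only way for $r$ to learn anything about $M_i$ is to actually receive the contents of one of $l_i$'s broadcasts, which happens only in rounds when (a) $l_i$ is the \emph{unique} broadcaster among $L$ (else all broadcasts collide at every $r$) and (b) no receiver fault occurs at $r$.

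For the counting, let $s_i$ be the number of rounds in $P'$ where $l_i$ is the unique broadcaster; these sets of rounds are disjoint across $i$, so $\sum_i s_i \le T'$. For a fixed pair $(r,i)$, each of the $s_i$ potential deliveries to $r$ is an independent $\mathrm{Bernoulli}(1-p)$, so the probability $r$ receives \emph{nothing} from $l_i$ is $p^{s_i}$; since $M_i$ is uniform and $r$'s posterior then equals the prior, $r$ reconstructs $M_i$ correctly only with probability $n^{-\Omega(1)}$ in that event, and so $r$ must receive at least one such broadcast. The events $\{r \text{ misses } M_i\}$ over all $(r,i)$ are mutually independent (disjoint rounds across $i$, independent faults across $r$), so constant overall success probability forces
\[
  \prod_{i=1}^{\Delta}\bigl(1-p^{s_i}\bigr)^{\Delta} \;=\; \Omega(1),
\]
which via $-\ln(1-x)\ge x$ yields $\sum_i p^{s_i} = O(1/\Delta)$. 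Since $s \mapsto p^s$ is convex, Jensen's inequality applied to this constraint gives $p^{\bar s} \le O(1/\Delta^2)$ for $\bar s := \tfrac{1}{\Delta}\sum_i s_i$, so $\bar s \ge \log_{1/p}(\Omega(\Delta^2)) = \Omega(\log \Delta)$, and therefore $T' \ge \sum_i s_i = \Omega(\Delta \log \Delta)$.

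The main obstacle is making rigorous the second paragraph: even under condition (2) of global control (``nodes learn which nodes broadcast''), the broadcast \emph{pattern} itself cannot encode $M_i$, because the scheduler is restricted to public data and so every node's schedule is independent of the private inputs. Pinning this down, together with the observation that collisions make $l_i$'s non-unique rounds contribute nothing to $r$'s knowledge of $M_i$, localizes all information about $M_i$ to bits $l_i$ transmits \emph{and} $r$ actually receives; after that, the derivation above is a standard convexity-plus-independence calculation.
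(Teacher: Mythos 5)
Your proposal is correct and shares the paper's overall strategy: the same hard protocol (each $l_i$ broadcasts its private input $M_i$ once), the same structural observations that directedness confines all information about $M_i$ to $l_i$'s own transmissions, that rounds with two or more broadcasters in $L$ are useless because they collide at every $r \in R$, and that under global control the broadcast pattern cannot smuggle input bits since the scheduler is input-oblivious (a point the paper leaves implicit and you make explicit, which is a genuine improvement). Where you diverge is the finishing count: the paper averages over $i$ to find a single message $i^*$ broadcast at most $c\log\Delta$ times, then reduces to the star lower bound of Lemma~\ref{L:non-coding} and derives a contradiction from the failure probability tending to $1$; you instead keep all $\Delta^2$ (receiver, message) delivery constraints, use independence (disjoint solo-broadcast rounds across $i$, independent receiver faults across $r$) to get the product constraint $\prod_i (1-p^{s_i})^{\Delta} = \Omega(1)$, extract $\sum_i p^{s_i} = O(1/\Delta)$, and apply Jensen to conclude $\bar{s} = \Omega(\log\Delta)$ and hence $T' = \Omega(\Delta\log\Delta)$. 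Your route gives the bound directly rather than by contradiction and does not need the star lemma as a black box, at the cost of slightly more bookkeeping; the paper's route is shorter by reuse. One caveat applies equally to both arguments: you (like the paper) treat the per-message solo-broadcast counts $s_i$ (the paper's $t_i$) as fixed numbers, whereas global control allows the schedule to adapt to observed faults, so a fully rigorous treatment would handle $s_i$ as random variables (e.g., by conditioning on the round budget or a stopping-time argument); since the paper's own proof is written at the same level of rigor, this is not a gap relative to it.
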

\gdef\ProofLdirected{
\begin{proof}
  Let $L = \{ l_1, l_2, \ldots, l_\Delta \}$ and $R = \{ r_1, r_2, \ldots, r_\Delta \}$ be the set of nodes on both sides of the partition. Every node $l_i \in L$ gets private input $M_i$ and needs to broadcast it to all nodes in $R$. In the faultless protocol $P$, $l_i$ broadcasts $M_i$ in round $i \in [\Delta]$.

  Now consider a noisy protocol $P'$. Since $G$ is directed, the only node from $L$ that has knowledge of $M_i$ is $l_i$. Moreover, we can assume without loss of generality that in any one round of $P'$ at most one node in $L$ broadcasts, since if this were not the case either no node in $R$ would be sent a message or a collision would occur at every node in $R$. Let $t_i$ be the number of rounds in $P'$ that $l_i$ broadcasts $M_i$. For the sake of contradiction, assume that the length of $P'$ is $c \Delta \log \Delta$ for a sufficiently small constant $c > 0$. Therefore, $\sum_{i=1}^\Delta t_i \le c \Delta \log \Delta$ and there exists $i^* \in [\Delta]$ such that $t_{i^*} \le c \log \Delta$. Since $M_{i^*}$ can only be broadcasted from $l_{i^*}$, $M_{i^*}$ is broadcasted at most $c \log \Delta$ times by an averaging argument. Thus, we have a star with central node $l_{i^*}$ and leaves given by $R$ with the assumption that  $c \log \Delta$ rounds suffices to spread a message from $l_{i^*}$ to every node in $R$. The remainder of the proof is identical to the strategy given in \Cref{L:non-coding} and hence is omitted.
\end{proof}
}\fullOnly{\ProofLdirected}

\subsection{Unconditional Lower Bound Hypothesis}
We do not believe there exists an $o(\poly(\log \Delta))$ multiplicative overhead simulation, even for non-adaptive protocols, and in this section we put forward a candidate hard example that might be used to prove this claim.

\textbf{Construction.} Let $G$ be a bipartite network with partition $(L, R)$, where $|L| = |R| = n$. Divide the nodes in $L$ into $\Delta$ groups of size $\frac{n}{\Delta}$, namely $L^1, \ldots, L^\Delta$. Let $l^i_1, \ldots, l^i_{n/\Delta}$ be the nodes in $L^i$. We repeat the following for $t = 1, 2, \ldots, \Delta$ iterations: pick a fresh independent permutation $\pi : [n] \to [R]$ and divide $R$ into $\Delta$ groups of size $n / \Delta$ according to $\pi$. Specifically, let $R^1 = \{ \pi(1), \pi(2), \ldots, \pi(n/\Delta) \}, R^2 = \{ \pi(n/\Delta + 1), \ldots, \pi(2 n/\Delta) \}$, ..., $R^\Delta = \{ \pi(n - n/\Delta + 1), \ldots, \pi(n) \}$. Note that the grouping $R$ changes between iterations unlike $L$ which remains fixed. Fully connect $l_t^i$ to all the nodes in $R^i$ for all $i \in [\Delta]$.
% \footnote{A previous version of this work mistakenly did not include the above random permutation in our candidate lower bound construction. Without this random permutation simulating the stated protocol is trivial. We have also added intuition as to why one would expect the above protocol is hard to simulate.}
%%%%%%%%%%%%%%%%%%%% For arXiv, you should probably uncomment this!!! %%%%%%%%%%%%%%%%%%%%%%%
 
\begin{figure}
\centering
\includegraphics[scale=.3]{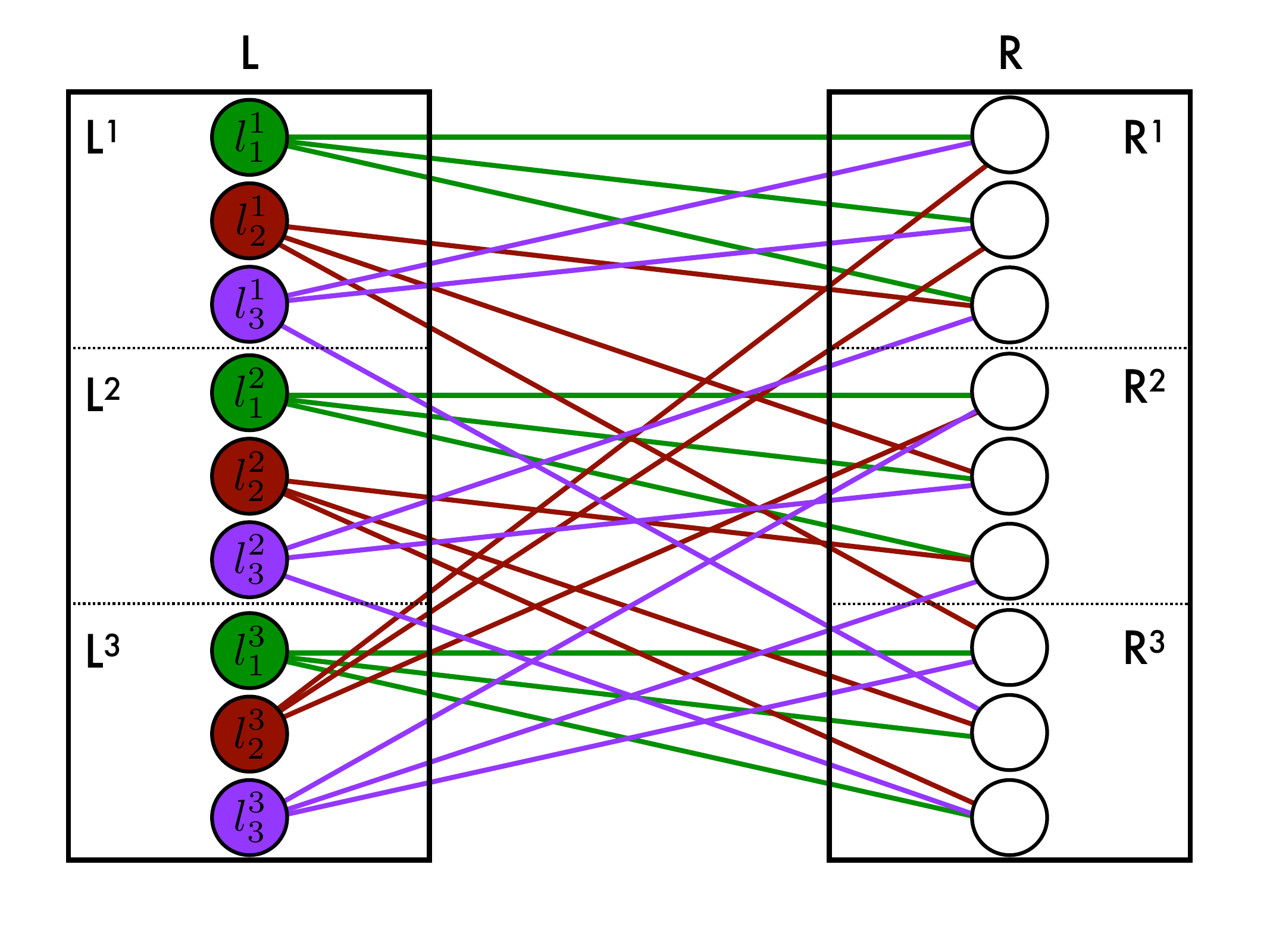}
\caption{A particular sample of our construction which we believe could help prove an unconditional lower bound. $R_i$ labeled according to the first iteration of our construction. $\Delta = 3$, $n = 9$. Nodes (and incident edges) colored according to the round in which they broadcast in the faultless protocol.}
\label{fig:conjLB}
\end{figure}

\textbf{Why we believe this protocol is hard.} Directly forwarding messages from a node $l \in L$ to each one of $l$'s neighbors requires a multiplicative $\Omega(\log \Delta)$ overhead before all of the neighbors receive the message. Thus, if we assume there is a $o(\log \Delta)$ overhead protocol, there must be a large fraction of messages that are delivered indirectly. That is, many nodes in $R$ receive many of the messages they need to simulate the original protocol from a different nodes than they do in the faultless protocol. However, indirectly delivering messages seems to require strictly more rounds than directly sending messages. Each $r \in R$ roughly wants to receive private input from a random subset of $L$. Therefore, if $r \in R$ receives a message indirectly from a neighbor $l \in L$, it is unlikely that the neighbors of $l$ apart from $r$ need this message to simulate the original protocol. Thus, while $l$ delivers a message to $r$, it blocks all other neighbors of $l$ from receiving messages they need to simulate the protocol. Lastly, we note that this problem roughly corresponds to a random instance of \emph{index coding}~\cite{bar2011index}, for which the bounds are currently not fully understood.

\section{Conclusion}\label{sec:conc}
In this paper we compared the computational power of noisy radio networks relative to classic radio networks. We demonstrated that any non-adaptive radio network protocol can be simulated with a multiplicative cost of $\poly(\log \Delta, \log \log n)$ rounds. Moreover, we demonstrated that a general radio network protocol can be simulated with a multiplicative $O(\Delta \log ^2 \Delta)$ round-overhead. 
%All of our simulations are based on nodes maintaining how much progress they have made and simulating the round of their least-progressed neighbor. 
%The key techniques we use are a ``blaming chain'' argument to demonstrate that nodes in such a simulation make progress quickly and a distributed binary search with throttling to enable efficient exchange of progress among nodes. 
We also showed lower bounds that suggest that any simulation of a radio network protocol by a noisy radio network requires $\Omega(\log \Delta)$ additional rounds multiplicatively.

Though we give a simulation for general protocols, the main focus of technical content of this work has been non-adaptive protocols. Non-adaptive protocols capture a number of well-known protocols (e.g.\ the optimal broadcast algorithm of \citet{gkasieniec2007faster}) and this focus has enabled us to understand both how delays propagate (see our blaming chain arguments) and how to efficiently exchange information among nodes (see our distributed binary search) in noisy radio networks. Having given solutions to these challenges, we now state promising directions for future work in noisy radio networks.

Recall that the third challenge for local-synchronization-based simulations described in \Cref{sec:intro} is that nodes cannot distinguish between not receiving a message in the original protocol and a message being dropped by a random fault. Our general protocol solves this issue but only through a costly subroutine in which every node shares information with all of its neighbors, thereby incurring an $O(\Delta)$ overhead. We leave as an open question whether this challenge can be overcome with $\poly (\log \Delta)$ overhead. ``Backtracking'' on faulty progress has been the subject of some interactive coding literature---see \citet{haeupler2014interactive}---and will likely prove insightful on this front. 

%Second, whether acknowledgements are necessary in simulations in noisy radio networks remains open. In our non-adaptive protocol simulation we do not require that receivers send a message to senders acknowledging that they have successfully received the broadcast message. Though our general protocol simulation solves this issue with a $O(\Delta \log ^2 \Delta)$ round overhead, we leave as an open question whether some form of a $\poly \log \Delta$ solution for acknowledgements is possible\footnote{Such a solution could not simply have every neighbor receiver send a message to the sender as this would require $\Omega(\Delta)$ rounds.} and whether such a solution is even necessary for efficient simulation. 
%We are particularly interested in this question as it is not too hard to see that a $\poly \log \Delta$ round solution for such acknowledgements combined with our techniques for non-adaptive simulations would yield a $\poly \log \Delta$ general protocol solution. 
%Lastly, we note that this question ties in nicely with past work on MAC layers \cite{ghaffari2014multi,kuhn2009abstract} where acknowledgments in a wireless setting are required.

As a final direction for future work we note that many of our techniques apply to simulations of faulty versions of other models of distributed computing. For instance, applying the techniques of our general protocol simulation to a receiver fault version of CONGEST almost immediately yields a simulation of CONGEST with receiver faults by CONGEST with a $O(\log \Delta)$ multiplicative round overhead. We expect further applications abound.

\newpage

\bibliographystyle{plainnat}
\bibliography{main}

\begin{thebibliography}{44}
\providecommand{\natexlab}[1]{#1}
\providecommand{\url}[1]{\texttt{#1}}
\expandafter\ifx\csname urlstyle\endcsname\relax
  \providecommand{\doi}[1]{doi: #1}\else
  \providecommand{\doi}{doi: \begingroup \urlstyle{rm}\Url}\fi

\bibitem[Ahlswede et~al.(2006)Ahlswede, Cai, Li, and Yeung]{Ahlswede2006}
R.~Ahlswede, Ning Cai, S.~Y.R. Li, and R.~W. Yeung.
\newblock Network information flow.
\newblock \emph{IEEE Transactions on Information Theory}, 46\penalty0
  (4):\penalty0 1204--1216, September 2006.

\bibitem[Alon et~al.(1991)Alon, Bar-Noy, Linial, and Peleg]{alon1991lower}
Noga Alon, Amotz Bar-Noy, Nathan Linial, and David Peleg.
\newblock A lower bound for radio broadcast.
\newblock \emph{Journal of Computer and System Sciences}, 43\penalty0
  (2):\penalty0 290--298, 1991.

\bibitem[Attiya and Welch(1998)]{AttiyaWBook}
Hagit Attiya and Jennifer Welch.
\newblock \emph{Distributed Computing. Fundamentals, Simulations, and Advanced
  Topics.}
\newblock McGraw-Hill, 1998.

\bibitem[Awerbuch(1985)]{Awerbuch1985}
Baruch Awerbuch.
\newblock Complexity of network synchronization.
\newblock \emph{Journal of the ACM (JACM)}, 32\penalty0 (4):\penalty0 804--823,
  October 1985.

\bibitem[Awerbuch and Peleg(1990)]{AwerbuchP1990}
Baruch Awerbuch and David Peleg.
\newblock Network synchronization with polylogarithmic overhead.
\newblock In \emph{Annual ACM Symposium on Theory of Computing (STOC)}, 1990.

\bibitem[Bar-Yehuda et~al.(1992)Bar-Yehuda, Goldreich, and Itai]{bar1992time}
Reuven Bar-Yehuda, Oded Goldreich, and Alon Itai.
\newblock On the time-complexity of broadcast in multi-hop radio networks: An
  exponential gap between determinism and randomization.
\newblock \emph{Journal of Computer and System Sciences}, 45\penalty0
  (1):\penalty0 104--126, 1992.

\bibitem[Bar-Yossef et~al.(2011)Bar-Yossef, Birk, Jayram, and
  Kol]{bar2011index}
Ziv Bar-Yossef, Yitzhak Birk, TS~Jayram, and Tomer Kol.
\newblock Index coding with side information.
\newblock \emph{IEEE Transactions on Information Theory}, 57\penalty0
  (3):\penalty0 1479--1494, 2011.

\bibitem[Braverman et~al.(2017)Braverman, Efremenko, Gelles, and
  Haeupler]{braverman2017constant}
Mark Braverman, Klim Efremenko, Ran Gelles, and Bernhard Haeupler.
\newblock Constant-rate coding for multiparty interactive communication is
  impossible.
\newblock \emph{Journal of the ACM (JACM)}, 65\penalty0 (1):\penalty0 4, 2017.

\bibitem[Censor{-}Hillel et~al.(2014)Censor{-}Hillel, Gilbert, Kuhn, Lynch, and
  Newport]{Censor-HillelGKLN14}
Keren Censor{-}Hillel, Seth Gilbert, Fabian Kuhn, Nancy~A. Lynch, and Calvin~C.
  Newport.
\newblock Structuring unreliable radio networks.
\newblock \emph{Distributed Computing}, 27\penalty0 (1):\penalty0 1--19, 2014.

\bibitem[Censor-Hillel et~al.(2017)Censor-Hillel, Haeupler, Hershkowitz, and
  Zuzic]{haeupler2017broad}
Keren Censor-Hillel, Bernhard Haeupler, D~Ellis Hershkowitz, and Goran Zuzic.
\newblock Broadcasting in noisy radio networks.
\newblock In \emph{ACM Symposium on Principles of Distributed Computing
  (PODC)}, 2017.

\bibitem[Censor{-}Hillel et~al.(2017)Censor{-}Hillel, Haeupler, Kelner, and
  Maymounkov]{Censor-HillelHK17}
Keren Censor{-}Hillel, Bernhard Haeupler, Jonathan~A. Kelner, and Petar
  Maymounkov.
\newblock Rumor spreading with no dependence on conductance.
\newblock \emph{SIAM Journal on Computing (SICOMP)}, 46\penalty0 (1):\penalty0
  58--79, 2017.

\bibitem[Censor{-}Hillel et~al.(2018)Censor{-}Hillel, Gelles, and
  Haeupler]{Censor-HillelGH18}
Keren Censor{-}Hillel, Ran Gelles, and Bernhard Haeupler.
\newblock Making asynchronous distributed computations robust to channel noise.
\newblock In \emph{Innovations in Theoretical Computer Science Conference
  (ITCS)}, pages 50:1--50:20, 2018.

\bibitem[Chlamtac and Kutten(1985)]{onBroadChlamtac}
I.~Chlamtac and S.~Kutten.
\newblock On broadcasting in radio networks - problem analysis and protocol
  design.
\newblock \emph{IEEE Transactions on Communications}, 33\penalty0
  (12):\penalty0 1240--1246, December 1985.

\bibitem[Czumaj and Rytter(2006)]{Czumaj2006115}
Artur Czumaj and Wojciech Rytter.
\newblock Broadcasting algorithms in radio networks with unknown topology.
\newblock \emph{Journal of Algorithms}, 60\penalty0 (2):\penalty0 115 -- 143,
  2006.

\bibitem[Efremenko et~al.(2018)Efremenko, Kol, and
  Saxena]{efremenko2018interactive}
Klim Efremenko, Gillat Kol, and Raghuvansh Saxena.
\newblock Interactive coding over the noisy broadcast channel.
\newblock In \emph{Annual ACM Symposium on Theory of Computing (STOC)}, pages
  507--520. ACM, 2018.

\bibitem[El-Gamal(1984)]{ElGamal1984}
Abbas El-Gamal.
\newblock Open problems presented at the 1984 workshop on specific problems in
  communication and computation.
\newblock \emph{Sponsored by bell communication research}, 1984.

\bibitem[Elaoud and Ramanathan(1998)]{elaoud1998adaptive}
Moncef Elaoud and Parameswaran Ramanathan.
\newblock Adaptive use of error-correcting codes for real-time communication in
  wireless networks.
\newblock In \emph{INFOCOM}, volume~2, pages 548--555, 1998.

\bibitem[Gallager(1988)]{Gallager88}
Robert~G. Gallager.
\newblock Finding parity in a simple broadcast network.
\newblock \emph{IEEE Transactions on Information Theory}, 34\penalty0
  (2):\penalty0 176--180, 1988.

\bibitem[Gelles et~al.(2017)]{gelles2017coding}
Ran Gelles et~al.
\newblock Coding for interactive communication: A survey.
\newblock \emph{Foundations and Trends in Theoretical Computer Science},
  13\penalty0 (1--2):\penalty0 1--157, 2017.

\bibitem[Ghaffari et~al.(2012)Ghaffari, Haeupler, Lynch, and
  Newport]{GhaffariHLN12}
Mohsen Ghaffari, Bernhard Haeupler, Nancy~A. Lynch, and Calvin~C. Newport.
\newblock Bounds on contention management in radio networks.
\newblock In \emph{Distributed Computing (DISC)}, pages 223--237, 2012.

\bibitem[Ghaffari et~al.(2013)Ghaffari, Lynch, and Newport]{GhaffariLN13}
Mohsen Ghaffari, Nancy~A. Lynch, and Calvin~C. Newport.
\newblock The cost of radio network broadcast for different models of
  unreliable links.
\newblock In \emph{ACM Symposium on Principles of Distributed Computing
  (PODC)}, pages 345--354, 2013.

\bibitem[Ghaffari et~al.(2014)Ghaffari, Kantor, Lynch, and
  Newport]{ghaffari2014multi}
Mohsen Ghaffari, Erez Kantor, Nancy Lynch, and Calvin Newport.
\newblock Multi-message broadcast with abstract mac layers and unreliable
  links.
\newblock In \emph{ACM Symposium on Principles of Distributed Computing
  (PODC)}, pages 56--65, 2014.

\bibitem[Ghaffari et~al.(2015)Ghaffari, Haeupler, and
  Khabbazian]{ghaffari2015randomized}
Mohsen Ghaffari, Bernhard Haeupler, and Majid Khabbazian.
\newblock Randomized broadcast in radio networks with collision detection.
\newblock \emph{Distributed Computing}, 28\penalty0 (6):\penalty0 407--422,
  2015.

\bibitem[Gilbert et~al.(2009)Gilbert, Guerraoui, Kowalski, and
  Newport]{gilbert2009interference}
Seth Gilbert, Rachid Guerraoui, Dariusz~R Kowalski, and Calvin Newport.
\newblock Interference-resilient information exchange.
\newblock In \emph{INFOCOM}, 2009.

\bibitem[G\k{a}sieniec et~al.(2007)G\k{a}sieniec, Peleg, and
  Xin]{gkasieniec2007faster}
Leszek G\k{a}sieniec, David Peleg, and Qin Xin.
\newblock Faster communication in known topology radio networks.
\newblock \emph{Distributed Computing}, 19\penalty0 (4):\penalty0 289--300,
  2007.

\bibitem[Goyal et~al.(2008)Goyal, Kindler, and Saks]{GoyalKS08}
Navin Goyal, Guy Kindler, and Michael~E. Saks.
\newblock Lower bounds for the noisy broadcast problem.
\newblock \emph{SIAM Journal on Computing (SICOMP)}, 37\penalty0 (6):\penalty0
  1806--1841, 2008.

\bibitem[Haeupler(2014)]{haeupler2014interactive}
Bernhard Haeupler.
\newblock Interactive channel capacity revisited.
\newblock In \emph{Symposium on Foundations of Computer Science (FOCS)}, pages
  226--235, 2014.

\bibitem[Haeupler and Wajc(2016)]{haeupler2016faster}
Bernhard Haeupler and David Wajc.
\newblock A faster distributed radio broadcast primitive.
\newblock In \emph{ACM Symposium on Principles of Distributed Computing
  (PODC)}, pages 361--370, 2016.

\bibitem[Khabbazian et~al.(2014)Khabbazian, Kowalski, Kuhn, and
  Lynch]{khabbazian2014decomposing}
Majid Khabbazian, Dariusz~R Kowalski, Fabian Kuhn, and Nancy Lynch.
\newblock Decomposing broadcast algorithms using abstract mac layers.
\newblock \emph{Ad Hoc Networks}, 12:\penalty0 219--242, 2014.

\bibitem[Kowalski and Pelc(2005)]{kowalski2005time}
Dariusz~R Kowalski and Andrzej Pelc.
\newblock Time complexity of radio broadcasting: adaptiveness vs. obliviousness
  and randomization vs. determinism.
\newblock \emph{Theoretical Computer Science}, 333\penalty0 (3):\penalty0
  355--371, 2005.

\bibitem[Kowalski and Pelc(2007)]{Kowalski2007}
Dariusz~R. Kowalski and Andrzej Pelc.
\newblock Optimal deterministic broadcasting in known topology radio networks.
\newblock \emph{Distributed Computing}, 19\penalty0 (3):\penalty0 185--195,
  2007.

\bibitem[Kranakis et~al.(1998)Kranakis, Krizanc, and Pelc]{kranakis1998fault}
Evangelos Kranakis, Danny Krizanc, and Andrzej Pelc.
\newblock Fault-tolerant broadcasting in radio networks.
\newblock In \emph{Annual European Symposium on Algorithms (ESA)}, pages
  283--294, 1998.

\bibitem[Kranakis et~al.(2008)Kranakis, Paquette, and
  Pelc]{kranakis2008communication}
Evangelos Kranakis, Michel Paquette, and Andrzej Pelc.
\newblock Communication in random geometric radio networks with positively
  correlated random faults.
\newblock In \emph{International Conference on Ad-Hoc Networks and Wireless},
  pages 108--121. Springer, 2008.

\bibitem[Kuhn et~al.(2009)Kuhn, Lynch, and Newport]{kuhn2009abstract}
Fabian Kuhn, Nancy Lynch, and Calvin Newport.
\newblock The abstract mac layer.
\newblock In \emph{Distributed Computing (DISC)}, pages 48--62, 2009.

\bibitem[Kuhn et~al.(2010)Kuhn, Lynch, Newport, Oshman, and Richa]{KuhnLNOR10}
Fabian Kuhn, Nancy~A. Lynch, Calvin~C. Newport, Rotem Oshman, and
  Andr{\'{e}}a~W. Richa.
\newblock Broadcasting in unreliable radio networks.
\newblock In \emph{ACM Symposium on Principles of Distributed Computing
  (PODC)}, pages 336--345, 2010.

\bibitem[Kushilevitz and Mansour(1993)]{Kushilevitz1993}
Eyal Kushilevitz and Yishay Mansour.
\newblock An omega(d log(n/d)) lower bound for broadcast in radio networks.
\newblock In \emph{ACM Symposium on Principles of Distributed Computing
  (PODC)}, 1993.

\bibitem[Kushilevitz and Mansour(1998)]{kushilevitz1998computation}
Eyal Kushilevitz and Yishay Mansour.
\newblock Computation in noisy radio networks.
\newblock In \emph{Annual ACM-SIAM Symposium on Discrete Algorithms (SODA)},
  volume~98, pages 236--243, 1998.

\bibitem[Leentvaar and Flint(1976)]{leentvaar1976capture}
Krijn Leentvaar and Jan Flint.
\newblock The capture effect in fm receivers.
\newblock \emph{IEEE Transactions on Communications}, 24\penalty0 (5):\penalty0
  531--539, 1976.

\bibitem[Newman(2004)]{Newman04}
Ilan Newman.
\newblock Computing in fault tolerance broadcast networks.
\newblock In \emph{{IEEE} Conference on Computational Complexity {(CCC)}},
  pages 113--122, 2004.

\bibitem[Peleg(2007)]{Peleg2007}
David Peleg.
\newblock \emph{Time-Efficient Broadcasting in Radio Networks: A Review}, pages
  1--18.
\newblock Springer Berlin Heidelberg, 2007.

\bibitem[Rajagopalan and Schulman(1994)]{rajagopalan1994coding}
Sridhar Rajagopalan and Leonard Schulman.
\newblock A coding theorem for distributed computation.
\newblock In \emph{Annual ACM Symposium on Theory of Computing (STOC)}, pages
  790--799, 1994.

\bibitem[Richa and cheideler(2008)]{richa2008jamming}
Andr{\'e}a~W Richa and Christian cheideler.
\newblock Jamming-resistant mac protocols for wireless networks.
\newblock \emph{Encyclopedia of Algorithms}, pages 1--5, 2008.

\bibitem[Schulman(1996)]{schulman1996coding}
Leonard~J Schulman.
\newblock Coding for interactive communication.
\newblock \emph{IEEE Transactions on Information Theory}, 42\penalty0
  (6):\penalty0 1745--1756, 1996.

\bibitem[Wicker(1994)]{WickerRSCodes}
Stephen~B. Wicker.
\newblock \emph{Reed-Solomon Codes and Their Applications}.
\newblock IEEE Press, Piscataway, NJ, USA, 1994.

\end{thebibliography}

\newpage
\appendix
  
\shortOnly{
\section{Deferred Proofs}\label{sec:deferred-proofs}

\LmainRoutine*
\ProofLmainRoutine
\LlearnDelays*
\ProofLlearnDelays
\LdistToActive*
\ProofLdistToActive
\Lbroadcast*
\ProofLbroadcast
\staticSimulationTheorem*
\ProofStaticSimulationTheorem
\Ldirected*
\ProofLdirected
}

\section{Tail Bounds}\label{sec:tail-bounds}
In this section we prove the tail bounds that we use throughout this paper.
\begin{lemma}
  \label{lemm:tail-bounds-for-tail-bounds}
  Let $X_1, \ldots, X_n$ be independent random variables taking on integer values. Let $\mu_i = \E[X_i]$ and suppose that for every $i$ and every $t > 0$ we have $\Pr[X_i - \mu_i \ge t] \le \exp(-ct)$ for some $c > 0$. Then $\Pr[\sum_{i \in [n]} X_i - \mu_i \ge t] \le \exp(-\frac{c}{3} t)$ for $t \ge L\cdot n$, where $L > 0$ is a constant that depends only on $c$.
\end{lemma}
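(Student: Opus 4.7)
The plan is to apply the standard Chernoff/MGF technique, with the key observation that only an upper tail bound on each $X_i$ is needed to bound the upper tail of the sum. Let $Y_i = X_i - \mu_i$, so $\E[Y_i] = 0$ and $\Pr[Y_i \ge t] \le e^{-ct}$ for all $t > 0$. For any $\lambda > 0$, the Chernoff step gives
\[
  \Pr\Big[\sum_{i=1}^n Y_i \ge t\Big] \le e^{-\lambda t} \prod_{i=1}^n \E[e^{\lambda Y_i}],
\]
so everything reduces to a uniform bound on the individual MGFs $\E[e^{\lambda Y_i}]$.

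The first substantive step is to upper bound $\E[e^{\lambda Y_i}]$ using only the one-sided tail hypothesis. Since $\lambda > 0$ and $Y_i \le Y_i^+ := \max(Y_i, 0)$, we have $e^{\lambda Y_i} \le e^{\lambda Y_i^+}$, so
\[
  \E[e^{\lambda Y_i}] \le \E[e^{\lambda Y_i^+}].
\]
For any nonnegative random variable $Z$, $\E[Z] = \int_0^\infty \Pr[Z > s]\, ds$; applying this to $Z = e^{\lambda Y_i^+} \ge 1$ and substituting $u = (\ln s)/\lambda$ gives
\[
  \E[e^{\lambda Y_i^+}] = 1 + \lambda \int_0^\infty \Pr[Y_i > u]\, e^{\lambda u}\, du \le 1 + \lambda \int_0^\infty e^{-(c-\lambda)u}\, du = 1 + \frac{\lambda}{c - \lambda},
\]
valid for any $\lambda \in (0, c)$.

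Next, I would pick a convenient $\lambda$ in this range, say $\lambda = c/2$, which makes the per-variable bound $\E[e^{\lambda Y_i}] \le 2 \le e$. Plugging into Chernoff and combining over the $n$ independent variables yields
\[
  \Pr\Big[\sum_{i=1}^n Y_i \ge t\Big] \le e^{-ct/2} \cdot e^{n} = \exp\!\left(-\frac{c}{2}t + n\right).
\]
Finally, I would choose $L = 6/c$ so that whenever $t \ge L \cdot n$ one has $n \le (c/6)t$, and therefore $-\tfrac{c}{2}t + n \le -\tfrac{c}{3}t$, giving the desired bound $\exp(-\tfrac{c}{3}t)$.

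The only nontrivial obstacle is the first step: bounding the MGF using just an upper-tail hypothesis, with no control on the negative tail beyond $\E[Y_i] = 0$. The crucial observation is that dropping from $Y_i$ to $Y_i^+$ is free when $\lambda > 0$, so the negative tail simply never enters the calculation. Once that is in place the rest is a routine optimization of $\lambda$ and of the threshold $L$.
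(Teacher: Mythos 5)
Your proof is correct, but it takes a genuinely different route from the paper's. You use the classical exponential-moment (Chernoff) method: the only substantive step is bounding $\E[e^{\lambda(X_i-\mu_i)}]$ at $\lambda=c/2$ using just the one-sided tail hypothesis, which you do correctly by passing to the positive part $Y_i^+$ and integrating the tail (layer-cake), giving $\E[e^{\lambda Y_i}]\le 1+\frac{\lambda}{c-\lambda}=2\le e$; the resulting prefactor $e^{n}$ is then absorbed because $t\ge L n$ with $L=6/c$, which depends only on $c$ as required. The paper instead argues combinatorially: it notes that the event $\sum_i (X_i-\mu_i)\ge t$ forces the existence of a nonnegative \emph{integer} sequence $u_1,\dots,u_n$ with $\sum_i u_i=t$ and $u_i\le (X_i-\mu_i)^+$, bounds the probability of each fixed sequence by $e^{-ct}$ via independence, and union bounds over the at most $e^n(1+\frac{t}{n})^n$ compositions of $t$, again killing the entropy factor with $t\ge Ln$. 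Both arguments have the same shape (an $e^{O(n)}$ loss absorbed by the assumption $t\ge Ln$), but yours never uses the integer-valued hypothesis and is therefore slightly more general and arguably cleaner, while the paper's is elementary in that it avoids moment generating functions entirely, at the price of needing integrality to enumerate the sequences.
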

\begin{proof}
  We use $V^+ = \max(V, 0)$ to denote the positive part of a real number; note that $\Pr[(X_i - \mu_i)^+ \ge t] \le \exp(-c t)$ since probability is bounded above by $1$. Fix $t \in \mathbb{Z}_{\ge 0}$; any event where $\sum_{i \in [n]} X_i - \mu_i \ge t$ gives a non-negative integer sequence $u_i = (X_i - \mu_i)^+$ where $\sum_{i \in [n]} u_i \ge t$. This, in turn, gives a non-negative integer sequence $u_i \le (X_i - \mu_i)^+$ where $\sum_{i \in [n]} u_i = t$, for instance, by arbitrary decreasing positive $u_i$ until the sum is $t$. We will bound the probability of such a sequence existing.

Fix any non-negative integer sequence $u_i \in \mathbb{Z}_{\ge 0}$ where $\sum_{i \in [n]} u_i = t$ and $u_i \le (X_i - \mu_i)^+$. It is well-known combinatorial fact that there are $\binom{t+n-1}{n-1} \le \left (\frac{e(t + n - 1)}{n-1}\right)^{n-1} \le e^n (1 + \frac{t}{n})^n$ such sequences. Fixing one such sequence $u_i$, we have
  \begin{align*}
    & \Pr[\forall i \in [n], \; u_i \le (X_i - \mu_i)^+] \\
    \le & \prod_{i \in [n]} \Pr[(X_i - \mu)^+ \ge u_i] \\
    \le & \exp\left(- c \sum_{i \in [n]} u_i \right) = \exp(- c t) .
  \end{align*}
  
  Finally, union bounding over at most $e^n (1 + \frac{t}{n})^n$ sequences, we get $\Pr[\sum_{i \in [n]} X_i - \mu_i \ge t] \le e^n (1 + \frac{t}{n})^n \exp(-c t)$. We can choose a sufficiently large $L$ such that both $1 + x \le \exp(\frac{c}{3} x)$ for $x \ge L$, and $e^{1/L} \le \exp(\frac{c}{3} x)$ hold. In this case $\Pr[\sum_{i \in [n]} X_i - \mu_i \ge t] \le \exp(\frac{t}{L}) \exp(\frac{c}{3} t) \exp(-c t) \le \exp(-c/3 \cdot t)$.
\end{proof}

\begin{lemma}
  \label{lem:chernoff-sum-of-max}
  Let $Y_1, Y_2, \ldots, Y_T$ be independent maximums of at most $\Delta \ge 2$ independent geometric random variables with constant expectation. Then $\Pr[\sum_{i \in [T]} Y_i \ge C ( T \log \Delta + t)] \le \exp(-t)$ for all $t \ge 0$, where $C > 0$ is a constant.
\end{lemma}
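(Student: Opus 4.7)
The plan is to combine a per-variable exponential-tail bound with the Chernoff-style \Cref{lemm:tail-bounds-for-tail-bounds}, but applied with a carefully chosen constant shift in place of the actual mean of each $Y_i$.

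First, derive the per-variable tail. Let $\beta := -\log(1-p) > 0$ so that $(1-p) = e^{-\beta}$. A single geometric random variable with parameter $p$ satisfies $\Pr[G \geq s] \leq (1-p)^{s-1}$; hence $Y_i$, being the maximum of at most $\Delta$ such variables, satisfies $\Pr[Y_i \geq s] \leq \Delta(1-p)^{s-1}$ by a union bound. Choose $\mu^* := \lceil (\log \Delta)/\beta \rceil + 1 = O(\log \Delta)$ so that $\Delta(1-p)^{\mu^*-1} \leq 1$. Then for every integer $t \geq 0$,
\[
\Pr[Y_i - \mu^* \geq t] \;\leq\; \Delta(1-p)^{\mu^* + t - 1} \;\leq\; e^{-\beta t}.
\]

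Second, invoke \Cref{lemm:tail-bounds-for-tail-bounds} with $X_i := Y_i$ and the pseudo-mean $\mu_i := \mu^*$. Inspection of the proof of that lemma shows it uses only the stated per-variable tail bound and never the fact that $\mu_i = \E[X_i]$; so the argument goes through verbatim and yields
\[
\Pr\!\Big[\sum_{i=1}^T Y_i \;\geq\; T\mu^* + t'\Big] \;\leq\; \exp(-\beta t'/3)
\]
whenever $t' \geq L T$, where $L > 0$ depends only on $\beta$.

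Finally, take $C$ a sufficiently large constant depending on $p$ (e.g., any $C$ with $C\log\Delta \geq 2\mu^* + 2L$ and $\beta C \geq 6$) and set $t' := C(T\log\Delta + t) - T\mu^*$. One checks that $t' \geq (C/2)\,T\log\Delta \geq LT$ for every $t \geq 0$ so the lemma applies, and that $\beta t'/3 \geq t$ (using $\beta C/6 \geq 1$ together with $\beta C T\log\Delta/6 \geq \beta T\mu^*/3$). Substituting gives $\Pr[\sum_i Y_i \geq C(T\log\Delta + t)] \leq \exp(-t)$, as required.

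The only subtle point, and the sole reason for using a pseudo-mean, is that when $Y_i$ is the maximum of only a few geometrics the true mean $\E[Y_i]$ can be $\Theta(1)$, while the tail decays cleanly only past the scale $\Theta(\log\Delta)$. Applying \Cref{lemm:tail-bounds-for-tail-bounds} with $\mu_i = \E[Y_i]$ would fail at small $t$ due to a multiplicative constant sitting in front of the exponential tail; the shift $\mu^*$ absorbs that constant uniformly and is the only ``real'' work in the proof.
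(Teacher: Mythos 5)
Your proposal is correct and takes essentially the same route as the paper: a union-bound tail for the maximum of at most $\Delta$ geometrics, a shift by a $\Theta(\log\Delta)$ quantity, an application of \Cref{lemm:tail-bounds-for-tail-bounds}, and then constant juggling to reach the stated form. Your ``pseudo-mean'' remark is apt but not a departure --- the paper's own proof likewise shifts by $\ln\Delta/c$ rather than the true expectation of the maximum, implicitly relying on the same observation that \Cref{lemm:tail-bounds-for-tail-bounds} only ever uses the tail hypothesis and not that $\mu_i = \E[X_i]$.
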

\begin{proof}
  Geometric random variable $G$ with constant expectation have a natural tail bound of the form $\Pr[G \ge t] \le \exp(-c t)$. Since there are only finitely many geometric random variables in question, we can choose a sufficiently small constant $c > 0$ such that the bound holds for all of them. Consequently, the maximum of at most $\Delta$ such variables exhibits a tail bound. Let $G_1, G_2, \ldots, G_k$ where $k \le \Delta$ be geometric random variables with constant expectation and set $Y := \max(G_1, \ldots, G_k)$. Then $\Pr[Y \ge t] \le \sum_{i \in [k]} \Pr[G_i \ge t] \le \Delta \exp(-c t)$. Setting $\mu := \frac{\ln \Delta}{c}$ and $t \gets t' + \mu$, we get that $\Pr[Y-\mu \ge t'] \le \exp(-c t')$.

  Having obtained a tail bound for $Y$'s, we can apply \Cref{lemm:tail-bounds-for-tail-bounds}, getting
  \begin{align*}
    \Pr\left[\left(\sum_{i \in [T]} Y_i\right) - T \frac{\log \Delta}{c} \ge t\right] \le \exp\left(-\frac{c}{3} t \right) && \text{when $t \ge L \cdot T$} .
  \end{align*}
  By setting $t \gets L \cdot T + t'$ and $c' := \frac{1}{c} + L$, we get that $\Pr\left[\left(\sum_{i \in [T]} Y_i\right) \ge c' \cdot T \log \Delta + t'\right] \le \exp(-\frac{c}{3} t')$ for all $t' \ge 0$. Finally, setting $t' \gets \frac{3}{c} t''$ and $C := \max(c'', \frac{3}{c})$ we get the final result $\Pr\left[\left(\sum_{i \in [T]} Y_i\right) \ge C ( T \log \Delta + t'') \right] \le \exp(- t'')$ for all $t'' \ge 0$.
  
\end{proof}

\end{document}